\newcommand{\bset}{\lbrace 0,1 \rbrace}
\newcommand{\Prob}[2]{{\Pr\limits_{#1}\left[ #2 \right]}}
\newcommand{\iProb}[1]{{\Pr\left[ #1 \right]}}
\newcommand{\ie}{i.\,e., }
\newcommand{\iek}{i.\,e., }
\newcommand{\eg}{e.\,g., }
\definecolor{commentgreen}{rgb}{0.1,0.1,0.6}
\renewcommand{\algorithmicrequire}{\textbf{Input: }}
\renewcommand{\algorithmicensure}{\textbf{Output: }}
\theoremstyle{definition}
\theoremstyle{plain}
\newtheorem{claim}[definition]{Claim}
\theoremstyle{remark}
\newcommand{\polylog}{\operatorname{polylog}}
\begin{document}

\title{OBDDs and (Almost) $k$-wise Independent Random Variables}

\author{Marc Bury\thanks{Supported by Deutsche Forschungsgemeinschaft, grant BO 2755/1-2.}\\TU Dortmund, LS2 Informatik, Germany}

\date{}

\maketitle 

\begin{abstract}
OBDD-based graph algorithms deal with the characteristic function of the edge set E of a graph $G = (V,E)$ which is represented by an OBDD and solve optimization problems by mainly using functional operations. We present an OBDD-based algorithm which uses randomization for the first time. In particular, we give a maximal matching algorithm with $O(\log^3 \vert V \vert)$ functional operations in expectation. This algorithm may be of independent interest. The experimental evaluation shows that this algorithm outperforms known OBDD-based algorithms for the maximal matching problem.

In order to use randomization, we investigate the OBDD complexity of $2^n$ (almost) $k$-wise independent binary random variables. We give a OBDD construction of size $O(n)$ for $3$-wise independent random variables and show a lower bound of $2^{\Omega(n)}$ on the OBDD size for $k \geq 4$. The best known lower bound was $\Omega(2^n/n)$ for $k \approx \log n$ due to Kabanets \cite{Kabanets03}. We also give a very simple construction of $2^n$ $(\varepsilon, k)$-wise independent binary random variables by constructing a random OBDD of width $O(n k^2/\varepsilon)$.
\end{abstract}

\newpage

\section{Introduction}
\label{sec:intro}

In times of Big Data, classical algorithms for optimization problems quickly exceed feasible running times or memory requirements. For instance, the rapid growth of the Internet and social networks results in massive graphs which traditional algorithms cannot process in reasonable time or space. In order to deal with such graphs, implicit (symbolic) algorithms have been investigated where the input graph is represented by the characteristic function $\chi_E$ of the edge set and the nodes are encoded by binary numbers. Using \emph{Ordered Binary Decision Diagrams} (OBDDs), which were introduced by Bryant \cite{Bryant86}, to represent $\chi_E$ can significantly decrease the space needed to store such graphs. Furthermore, using mainly functional operations, \eg binary synthesis and quantifications, which are efficiently supported by the OBDD data structure, many optimization problems can be solved on OBDD represented inputs (\cite{GentiliniPP03,GentiliniPP08,HachtelS97,Sawitzki04,SawitzkiSOFSEM06,SawitzkiLATIN06,Woe2006}). Implicit algorithms were successfully applied in many areas, \eg model checking \cite{BurchCMDH92}, integer linear programming \cite{LaiPV94} and logic minimization \cite{Coudert95}. With one of the first implicit graph algorithms, Hachtel and Somenzi \cite{HachtelS97} were able to compute a maximum flow on $0$-$1$-networks with up to $10^{36}$ edges and $10^{27}$ nodes in reasonable time.

There are two main parameters influencing the actual running time of OBDD-based algorithms: the number of functional operations and the sizes of all intermediate OBDDs used during the computation. The size of OBDDs representing graphs was investigated for bipartite graphs \cite{NuWo09}, interval graphs \cite{NuWo09,Gille13}, cographs \cite{NuWo09} and graphs with bounded tree- and clique-width \cite{MeerR09}. Bounding the sizes of the other OBDDs, which can occur during the computation, is quite difficult and could only be proven for very structured input graphs like grid graphs \cite{BolligGP12,Woe2006}. In terms of functional operations, Sawitzki \cite{Sawitzki07} showed that the set of problems solved by an implicit algorithm using $O(\log^k N)$ functional operations and functions defined on $O(\log N)$ variables is equal to the complexity class \textsc{FNC}, \iek the class of all optimization problems that can be efficiently solved in parallel. Implicit algorithms with these properties were designed for instance for topological sorting \cite{Woe2006},  minimum spanning tree \cite{Bollig12}, metric TSP approximation \cite{BolligC13} and maximal matching \cite{BolligP12} where a matching M, \ie a set of edges without a common vertex, is called maximal if M is no proper subset of another matching. However, Sawitzki's structural result yields neither a good transformation of parallel algorithms to implicit algorithms nor does it give a statement about the actual performance of the implicit algorithms. Nevertheless, designing implicit algorithms for optimization problems is not only an adaption of parallel algorithms but can give new insights into the problems. For example, Gentilini et al. \cite{GentiliniPP08} introduced a new notion of spine-sets in the context of implicit algorithms for connectivity related problems. When analyzing implicit algorithms, the actual running time can either be proven for very structured input graphs like \cite{Woe2006} did for topological sorting and \cite{BolligGP12} for maximum matching or the running time is experimentally evaluated like in \cite{HachtelS97} for maximum flows and in \cite{BolligGP12,Gille13} for maximum matching on bipartite graphs or unit interval graphs.

Overall there seems to be a trade-off: The number of operations is an important measure of difficulty \cite{BloemGS06} but decreasing the number of operations often results in an increase of the number of variables of the used functions. Since the worst case OBDD size of a function $f: \bset^n \rightarrow \bset$ is $\Theta(2^n/n)$, the number of variables should be as small as possible to decrease the worst case running time.
This trade-off was also empirically observed. For instance, an implicit algorithm computing the transitive closure that uses an iterative squaring approach and a polylogarithmic number of operations is often inferior to an implicit sequential algorithm, which needs a linear number of operations in worst case \cite{BloemGS06,HojatiTKB93}.
Another example is the maximal matching algorithm (BP) of Bollig and Pr\"{o}ger \cite{BolligP12} that uses only $O(\log^4 N)$ functional operations on functions with at most $6 \log N$ variables while the algorithm (HS) of Hachtel and Somenzi \cite{HachtelS97} uses $O(N \log N)$ operations in the worst case on function with at most $3 \log N$ variables. 
However, HS is clearly superior to BP on most instances (see Section \ref{sec:alg}). An additional reason might be its simplicity.

Using randomization in an explicit algorithm often leads to simple and fast algorithms. Here, we propose the first attempt at using randomization to obtain algorithms which have both a small number of variables and a small expected number of functional operations. For this, we want to represent random functions $f_r: \bset^n \rightarrow \bset$ with $\iProb{f_r(x) = 1} = p$ for every $x \in \bset^n$ and some fixed probability $0 < p < 1$ by OBDDs. Using random functions in implicit algorithms is difficult. We need to construct them efficiently but, obviously, if the function values are completely independent (and $p$ is a constant), then the OBDD (and even the more general FBDD or read-once branching program) size of $f_r$ is exponentially large with an overwhelming probability \cite{Weg94a}. Thus, we investigate the OBDD size and construction of (almost) $k$-wise independent random functions where the distribution induced on every $k$ different function values is (almost) uniform. 

\subsubsection*{Related Work} 
A succinct representation of $2^n$ random bits, which are $k$-wise independent, was presented by Alon et al. \cite{Alon86} using $\lfloor k/2 \rfloor n+1$ independent random bits. 
This number of random bits is very close to the lower bound of Chor and Goldreich \cite{ChorG89}. In order to reduce the number of random bits even further, Naor and Naor \cite{NaorN93} introduced the notion of almost $k$-wise independence where the distribution on every $k$ random bits is \enquote{close} to uniform. Constructions of almost $k$-wise independent random variables are also given in \cite{AlonGHP92} and are using only at most $2(\log n + \log k + \log(1/\varepsilon))$ random bits where $\varepsilon$ is a bound on the closeness to the uniform distribution.
Looking for a simple representation of almost $k$-wise independent random variables, Savický \cite{Savicky95} presented a Boolean formula of constant depth and polynomial size and used $n\log^2 k \log(1/\varepsilon)$ random bits. In all of these constructions, the running time of computing the $i$-th random bit with $0 \leq i \leq 2^n-1$ depends on $k$ and $\varepsilon$.

Such small probability spaces can be used for a succinct representation of a random string of length $2^n$, \eg in streaming algorithms \cite{AlonMS99}, or for derandomization \cite{Alon86,Luby86}. The randomized parallel algorithms from \cite{Alon86,Luby86} compute a maximal independent set (MIS) of a graph, \ie a subset $I$ of $V$ such that no two nodes of $I$ are adjacent and any vertex in $G$ is either in $I$ or is adjacent to a node of $I$. The computation of a MIS has also been extensively studied in the area of distributed algorithms \cite{AwerbuchGLP89,Linial92}. An optimal randomized distributed MIS algorithm was presented in \cite{MetivierRSZ11} where the time and bit complexity (bits per channel) is $O(\log N)$. Using completely independent random bits, Israeli and Itai \cite{IsraelI86} give a randomized parallel algorithm computing a maximal matching in time $O(\log N)$. 

While we are looking for $k$-wise independent functions with small OBDD size, Kabanets \cite{Kabanets03} constructed simple Boolean functions which are hard for FBDDs by investigating (almost) $\Theta(n)$-wise independent random functions and showed that the probability tends to $1$ as $n$ grows that the size is $\Omega(2^n/n)$. 

\subsubsection*{Our Contribution.} In Section \ref{sec:compl}, we show that the OBDD and FBDD size is at least $2^{\Omega(n + \log(p'))}$ with $p' = 2p(1-p)$ if the function values of $f_r$ are $k$-wise independent with $k \geq 4$. We give an efficient construction of OBDDs for $3$-wise independent random functions which is based on the known construction of $3$-wise independent random variables using BCH-schemes \cite{Alon86}. In Section \ref{sec:construction} we investigate a simple construction of a random OBDD due to Bollig and Wegener \cite{BolligW14} which generates almost $k$-wise independent random functions and has size $O((kn)^2/\varepsilon)$. Reading the actual value of the $i$-th random bit is just an evaluation of the function on input $i$ which can be done in $O(n)$ time, \ie it is independent of both $k$ and $\varepsilon$. This construction is used as an input distribution for our implicit algorithm in the experimental evaluation. In Section \ref{sec:alg} we use pairwise independent random functions to design a simple maximal matching algorithm that uses only $O(\log^3 N)$ functional operations in expectation and functions with at most $3 \log N$ variables. This algorithm can easily be extended to the MIS problem and can be implemented as a parallel algorithm using $O(\log N)$ time in expectation or as a distributed algorithm with $O(\log N)$ expected time and bit complexity (and is simpler than in \cite{MetivierRSZ11}). To the best of our knowledge, this is the first (explicit or implicit) maximal matching (or independent set) algorithm that does not need any knowledge about the graph (like size or node degrees) as well as uses only pairwise independent random variables. Eventually, we evaluate this algorithm empirically and show that known implicit maximal matching algorithms are outperformed by the new randomized algorithm.

\section{Preliminaries}
\label{sec:pre}

\subsubsection*{Binary Decision Diagrams}
\label{sec:pre_obdd}
We denote the set of Boolean functions $f: \bset^n \rightarrow \bset$ by $B_n$. For $x \in \bset^n$ denote the value of $x$ by $\vert x \vert := \sum_{i=0}^{n-1} x_i \cdot 2^i$. Further, for $l \in \mathbb{N}$, we denote by $[l]_2$ the corresponding binary number of $l$, \ie $\vert [l]_2 \vert = l$. In his seminal paper \cite{Bryant86}, Bryant introduced \emph{Ordered Binary Decision Diagrams} (OBDDs), that allow a compact representation of not too few Boolean functions and also supports many functional operations efficiently.

\begin{definition}[Ordered Binary Decision Diagram (OBDD)]\ 

\textbf{Order.} A \emph{variable order} $\pi$ on the input variables $X = \lbrace x_0, \ldots, x_{n-1} \rbrace$ of a Boolean function $f \in B_n$ is a permutation of the index set $I = \lbrace 0, \ldots, n-1 \rbrace$.

\textbf{Representation.} A \emph{$\pi$-OBDD} is a directed, acyclic, and rooted graph $G$ with two sinks labeled by the constants $0$ and $1$. Each inner node is labeled by an input variable from $X$ and has exactly two outgoing edges labeled by $0$ and $1$. Each edge $(x_i,x_j)$ has to respect the variable order $\pi$, \ie $\pi(i) < \pi(j)$. 
%The $\pi$-OBDD is called \emph{complete} if every variable is tested on every path.

\textbf{Evaluation.} An assignment $a \in \bset^n$ of the variables defines a path from the root to a sink by leaving each $x_i$-node via the $a_i$-edge. A $\pi$-OBDD $G_f$ represents $f$ iff for every $a \in \bset^n$ the defined path ends in the sink with label \nolinebreak $f(a)$.

\textbf{Complexity.} The \emph{size} of a $\pi$-OBDD $G$, denoted by $\vert G \vert$, is the number of nodes in $G$. The $\pi$-OBDD size of a function $f$ is the minimum size of a $\pi$-OBDD representing $f$. The OBDD size of $f$ is the minimum $\pi$-OBDD size over all variable orders $\pi$. 
%The minimal-size complete OBDD for $f$ is called \emph{quasi-reduced} OBDD.
The \emph{width} of $G$ is the maximum number of nodes labeled by the same input variable.
\end{definition}

The more general read-once branching programs or \emph{Free Binary Decision Diagrams} (FBDDs) were introduced by Masek \cite{Masek76}. In an FBDD every variable can only be read once on a path from the root to a sink (but the order is not restricted).

A simple function is the inner product $IP_n(x,y) = \bigoplus_{i=0}^{n-1} x_i \wedge y_i$ of two vectors $x,y \in \bset^n$. Let $\pi$ be a variable order where for every $0 \leq i \leq n$, the variables $x_i$ and $y_i$ are consecutive. 
%W.l.o.g. let $\pi = (x_0,y_0, \ldots, x_{n-1},y_{n-1})$. Define $b_j := \bigoplus_{i=0}^j x_i \wedge y_i$ for $0\leq j \leq n-1$. If we know the value of $b_j$ for some $j$, we can easily extend it to the value of $b_{j+1}$ by reading the pair $(x_{j+1}, y_{j+1})$ since $b_{j+1} = b_j \oplus (x_{j+1} \wedge y_{j+1})$. 
It is easy to see that the $\pi$-OBDD representing $IP_n$ has size $O(n)$ and width $2$. Notice that the $\pi$-OBDD size is still $O(n)$ if we replace an input vector, \eg $y$, by a constant vector $r \in \bset^n$.

In the following we describe some important operations on Boolean functions which we will use in this paper  (see, \eg Section 3.3 in \cite{Wegener00} for a detailed list). 
Let $f$ and $g$ be Boolean functions in $B_n$ on the variable set 
$X=\{x_0, \ldots, x_{n-1}\}$, $\pi$ a fixed order and let $G_f$ and $G_g$ be $\pi$-OBDDs representing $f$ and $g$, respectively. We denote the subfunction of $f$ where $x_j$ for some $0 \leq j \leq n-1$ is replaced by a constant $a \in \bset$ by $f_{\mid x_j=a}$.

\begin{enumerate}
\item \textbf{Negation:} Given $G_f$, compute a representation for the function $\overline{f} \in B_n$. Time: $O(1)$
\item \textbf{Replacement by constant:} Given $G_f$, an index $i\in \{0, \ldots, n-1\}$, and a Boolean constant $c_i\in \{0,1\}$, compute a representation for the subfunction $f_{|x_i=c_i}$. Time: $O(\vert G_f \vert)$
\item \textbf{Equality test:} Given $G_f$ and $G_g$, decide whether $f$ and $g$ are equal. Time: $O(1)$ in most implementations (when using so called \emph{Shared OBDDs}, see \cite{Wegener00}), otherwise $O(\vert G_f \vert + \vert G_g \vert)$.
%\item \textbf{Satisfiability:} Given $G_f$, decide whether $f$ is not the constant function $0$. Time: $O(1)$
%\item \textbf{Satisfiability count:} Given $G_f$, compute $|f^{-1}(1)|$. Time: $O(\vert G_f \vert)$
\item \textbf{Synthesis:} Given $G_f$ and $G_g$ and a binary Boolean operation $\otimes \in B_2$, compute a representation for the function $h\in B_n$ defined as $h:=f\otimes g$. Time: $O(\vert G_f \vert \cdot \vert G_g \vert)$
%or $O(\vert G_h \vert \cdot \log \vert G_h \vert)$
\item \textbf{Quantification:} Given $G_f$, an index $i\in \{1, \ldots, n\}$ and a quantifier $Q\in\{\exists, \forall\}$, compute a representation for the function $h\in B_n$ defined as $h:=Q x_i: f$ where  $\exists x_i: f := f_{|x_i=0}\vee f_{|x_i=1}$ and $\forall x_i: f := f_{|x_i=0}\wedge f_{|x_i=1}$. Time: see replacement by constant and synthesis
\end{enumerate}

In addition to the operations mentioned above, in implicit graph algorithms (see the next section) the following operation (see, \eg \cite{SawitzkiLATIN06}) is useful to reverse the edges of a given graph. We will use this operation implicitly by writing for instance $f(x,y)$ and $f(y,x)$ in the pseudo code of our algorithm.
\begin{definition}
Let $k \in \mathbb{N}$, $\rho$ be a permutation of $\{1,\ldots, k\}$ and
$f\in B_{kn}$ with input vectors $x^{(1)}, \ldots, x^{(k)} \in \bset^n$. The argument reordering $\mathcal{R}_\rho(f)\in B_{kn}$ with respect to $\rho$ is defined by $\mathcal{R}_\rho(f)(x^{(1)}, \ldots, x^{(k)}) := f(x^{(\rho (1))}, \ldots, x^{(\rho (k))})$.
\end{definition}
This operation can be computed by just renaming the variables and repairing the variable order using $3(k-1)n$ functional operations (see \cite{BolligLW96}).

 A function $f$ \emph{depends essentially} on a variable $x_i$ iff $f_{\mid x_i = 0} \neq f_{\mid x_i = 1}$. A characterization of minimal $\pi$-OBDDs due to Sieling and Wegener \cite{SielingW93} can often be used to bound the OBDD size. 
\begin{theorem}[\cite{SielingW93}]
\label{thm:minimal_obdd}
Let $f \in B_n$ and for all $i=0,\ldots,n-1$ let $s_i$ be the number of different subfunctions which result from replacing all variables $x_{{\pi(j)}}$ with $0 \leq j \leq i-1$ by constants and which essentially depend on $x_{\pi(i)}$. Then the minimal $\pi$-OBDD representing $f$ has $s_i$ nodes labeled by $x_{\pi(i)}$. 
\end{theorem}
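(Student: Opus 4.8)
The plan is to show that the claimed quantity $s_i$ is simultaneously a lower and an upper bound on the number of $x_{\pi(i)}$-nodes in a minimal $\pi$-OBDD for $f$, which yields the theorem. To lighten notation I assume $\pi$ is the identity (the general case is identical after renaming), so that the $i$-th variable read is $x_i$. For a node $v$ of a $\pi$-OBDD $G$ representing $f$, write $f_v$ for the function computed by the subdiagram rooted at $v$, and $\mathrm{index}(v)$ for the position of $v$'s label (set $\mathrm{index}(v)=n$ if $v$ is a sink); since $G$ respects $\pi$, $f_v$ depends essentially only on variables $x_j$ with $j\ge\mathrm{index}(v)$. For a partial assignment $a\in\bset^{i}$ to $x_0,\dots,x_{i-1}$ let $f_a:=f_{\mid x_0=a_0,\dots,x_{i-1}=a_{i-1}}$, and let $v(a)$ be the node reached from the root by repeatedly following, at an $x_j$-node with $j<i$, the outgoing $a_j$-edge, and halting at the first node that is a sink or is labelled $x_j$ with $j\ge i$. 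A short induction on the length of this path, using the evaluation property of OBDDs, gives $f_{v(a)}=f_a$.

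For the lower bound, suppose $f_a$ depends essentially on $x_i$. Then $f_{v(a)}=f_a$ depends essentially on $x_i$, so $\mathrm{index}(v(a))\le i$; combined with the stopping rule $\mathrm{index}(v(a))\ge i$ this forces $\mathrm{index}(v(a))=i$, \iek $v(a)$ is an $x_i$-node. Now if $a,a'\in\bset^{i}$ are partial assignments with $f_a\neq f_{a'}$ and both $f_a,f_{a'}$ essentially depending on $x_i$, then $f_{v(a)}=f_a\neq f_{a'}=f_{v(a')}$, hence $v(a)\neq v(a')$. Thus $G$ has at least one distinct $x_i$-node for each of the $s_i$ such subfunctions, so $G$ has at least $s_i$ nodes labelled $x_i$.

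For the upper bound I construct a $\pi$-OBDD $G^{\ast}$ attaining the bound. If $f$ is constant, $G^{\ast}$ is a single sink. Otherwise: for every $i$ and every subfunction $g$ arising from $f$ by replacing $x_0,\dots,x_{i-1}$ by constants and essentially depending on $x_i$, introduce an $x_i$-node $v_g$; add the two sinks; and take as root the node $v_f$ (where $f$ is viewed, if necessary, as a restriction of itself fixing the variables on which it does not essentially depend, so $v_f$ sits at the first essential level). To wire the $c$-edge of $v_g$, $c\in\bset$, put $h:=g_{\mid x_i=c}$: if $h$ is constant, route the edge to that sink; else let $i'>i$ be minimal with $h$ essentially depending on $x_{i'}$, note that $h$ also arises from $f$ by fixing $x_0,\dots,x_{i'-1}$ (the values of $x_i,\dots,x_{i'-1}$ being irrelevant), so $h$ equals some $g'$ with an already-created node $v_{g'}$ at level $i'$, and route the $c$-edge there. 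All edges go from a level to a strictly larger level, so $G^{\ast}$ respects $\pi$; an induction over levels shows $f_{v_g}=g$ for every node, in particular $f_{v_f}=f$, so $G^{\ast}$ represents $f$; and $G^{\ast}$ has exactly $s_i$ nodes labelled $x_i$. With the lower bound, every minimal $\pi$-OBDD for $f$ has precisely $s_i$ nodes labelled $x_{\pi(i)}$.

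The genuinely fiddly point — and the one I expect to need the most care — is the interplay between "subfunction obtained by fixing the first $i$ variables" and "subfunction obtained by fixing the first $i'$ variables": the same function can be reached at several apparent levels, and one must check that the $s_i$ are well-defined invariants of $f$, that the root is placed consistently, and that the $c$-edges in $G^{\ast}$ always land on a node that has indeed been created. The remainder is careful but routine path-tracing together with the evaluation property of OBDDs.
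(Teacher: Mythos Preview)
The paper does not prove this theorem; it is quoted from Sieling and Wegener \cite{SielingW93} as a tool and no argument is given. Your proof is correct and is essentially the classical Sieling--Wegener argument: the lower bound by tracing the path of each partial assignment to show that distinct subfunctions essentially depending on $x_{\pi(i)}$ force distinct $x_{\pi(i)}$-nodes, and the upper bound by the canonical construction with one node per such subfunction. The ``fiddly point'' you flag---that a subfunction may arise from fixing different numbers of leading variables---is exactly what the essential-dependence clause resolves, and your handling of it (placing each node at its first essential level and routing edges to the first essential level of the restricted function) is the standard one.
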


Lower bound techniques for FBDDs are similar but have to take into account that the order can change for different paths. The following property due to Jukna \cite{Jukna88} can be used to show good lower bounds for the FBDD size.

\begin{definition}
A function $f \in B_n$ with input variables $X=\{x_0, \ldots, x_{n-1}\}$ is called \emph{$r$-mixed} if for all $V \subseteq X$ with $\vert V \vert = r$ the $2^r$ assignments to the variables in $V$ lead to different \nolinebreak subfunctions.
\end{definition}

\begin{lemma}[\cite{Jukna88}]
\label{lem:kmixed}
The FBDD size of a $r$-mixed function is bounded below by $2^r-1$.
\end{lemma}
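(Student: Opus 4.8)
The plan is to fix an arbitrary FBDD $G$ computing the $r$-mixed function $f$ (assuming $r\le n$, else the hypothesis is vacuous) and to exhibit $2^r-1$ pairwise distinct nodes in $G$. First I would record two easy consequences of $r$-mixedness. Given a set $V$ of size $s\le r$, extend it to some $V'\supseteq V$ with $\vert V'\vert=r$; two distinct assignments to $V$ extend (by fixing the extra coordinates arbitrarily) to distinct assignments to $V'$, which induce distinct subfunctions by $r$-mixedness, and these stay distinct after the further restriction back to $V$. Hence $f$ is also $s$-mixed for every $s\le r$. Moreover, if $\alpha$ is any partial assignment with $\vert\mathrm{dom}(\alpha)\vert\le r-1$ and $x_j\notin\mathrm{dom}(\alpha)$, then $f_{\mid\alpha}$ must depend essentially on $x_j$: otherwise two assignments to a size-$r$ set $V\supseteq\mathrm{dom}(\alpha)\cup\{x_j\}$ that differ only in $x_j$ would yield the same subfunction, contradicting $r$-mixedness. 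In particular every such $f_{\mid\alpha}$ is non-constant, so no root-to-sink path of $G$ can read fewer than $r$ variables.

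Next I would unroll $G$ from the root into a tree $T$, cutting a branch as soon as it has read $r$ variables (or reached a sink). Since $G$ is read-once, a path $p$ of length $\ell$ in $T$ reads a set $S_p$ of $\ell$ variables and thereby defines a partial assignment $\alpha_p$; write $\mathrm{end}(p)$ for the node of $G$ that $p$ reaches, and note that the sub-FBDD rooted at $\mathrm{end}(p)$ queries only variables outside $S_p$ and must compute $f_{\mid\alpha_p}$ (the unique root-to-$\mathrm{end}(p)$ path consistent with $\alpha_p$ is $p$ itself). By the previous paragraph no node of $T$ at depth $\le r-1$ is a sink, so each such node has two children; hence $T$ contains the full binary tree down to depth $r$ and has exactly $\sum_{\ell=0}^{r-1}2^\ell=2^r-1$ nodes at depths $0,\dots,r-1$.

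The heart of the argument is to show that $p\mapsto\mathrm{end}(p)$ is injective on those $2^r-1$ shallow tree nodes. Suppose $\mathrm{end}(p)=\mathrm{end}(q)=v$ with $p,q$ of lengths $\ell\le\ell'\le r-1$. If $S_p=S_q$, then either $p=q$, or $\alpha_p\ne\alpha_q$; in the latter case $\ell$-mixedness gives $f_{\mid\alpha_p}\ne f_{\mid\alpha_q}$, while both functions are computed by the sub-FBDD at $v$ on the same free-variable set $X\setminus S_p$, a contradiction. If $S_p\ne S_q$, pick $x_j\in S_q\setminus S_p$ (nonempty, since $\vert S_q\vert\ge\vert S_p\vert$ and the two sets differ): the sub-FBDD at $v$ does not query $x_j$, because $x_j$ has already been read along $q$, yet it computes $f_{\mid\alpha_p}$, which depends essentially on $x_j$ since $\vert S_p\vert\le r-1$ — again a contradiction. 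Therefore the $2^r-1$ shallow nodes of $T$ map to pairwise distinct nodes of $G$, so $\vert G\vert\ge 2^r-1$.

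I expect the second case of the injectivity step to be the main obstacle: because an FBDD may read variables in a path-dependent order, two short computation paths ending at the same node need not read the same set of variables, so one cannot simply apply $r$-mixedness to a fixed $V$. The resolution is the observation (from the first paragraph) that $r$-mixedness forces every restriction by at most $r-1$ variables to depend on all remaining variables — which is incompatible with a read-once program "forgetting" a variable that one of the two paths has already consumed. Everything else is bookkeeping about path–subfunction correspondence in read-once branching programs.
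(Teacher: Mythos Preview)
The paper does not prove this lemma; it is stated with a citation to Jukna and used as a black box, so there is no in-paper argument to compare against. Your proof is correct and is essentially the standard argument: the two preliminary observations (that $r$-mixed implies $s$-mixed for all $s\le r$, and that every restriction by at most $r-1$ variables depends essentially on every remaining variable) are exactly the facts one needs, and your case split for injectivity handles the genuinely FBDD-specific difficulty---that two short paths to the same node may have read different variable sets---cleanly via the essential-dependence observation. One small point you might make explicit for completeness: in the case $S_p=S_q$ with $p\ne q$, the paths have equal length and hence must diverge at some internal node of the unrolled tree, which is what forces $\alpha_p\ne\alpha_q$; and the sub-FBDD rooted at $v$ reads no variable from $S_p\cup S_q$ (by read-once along either incoming path), which is precisely why it cannot compute a function that depends on $x_j\in S_q\setminus S_p$. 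Both are implicit in what you wrote.
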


\subsubsection*{OBDD-Based Graph Algorithms} Let $G = (V,E)$ be a directed graph with node set $V = \lbrace v_0, \ldots, v_{N-1} \rbrace$ and edge set $E \subseteq V \times V$. Here, an undirected graph is interpreted as a directed symmetric graph. Implicit algorithms work on the characteristic function $\chi_E \in B_{2n}$ of $E$ where $n = \lceil \log N \rceil$ is the number of bits needed to encode a node of $V$ and $\chi_E(x,y) = 1$ if and only if $(v_{\vert x \vert}, v_{\vert y \vert}) \in E$. Often it is also necessary to store the valid encodings of nodes by the characteristic function $\chi_V$ of $V$. Besides functional operations, OBDD-based algorithms can use $O(\polylog \vert V \vert)$ additional time, \eg for constructing OBDDs for a specific function (equality, greater than, inner product, ...).

\subsubsection*{Small probability spaces} A succinct representation of our random function is essential for our randomized implicit algorithm. For this, we have to used random functions with limited independence.
\begin{definition}[(Almost) $k$-wise independence]
Let $X_0,\ldots,X_{m-1}$ be $m$ binary random variables. These variables are called \emph{$k$-wise independent} with $k \leq m$ if and only if for all $0 \leq i_1 < \ldots i_k \leq m-1$ and for all $l_1,\ldots,l_k \in \bset$
\begin{equation*}
 \Prob{}{X_{i_1}=l_1 \wedge \ldots \wedge X_{i_k} = l_k}  =  2^{-k} 
\end{equation*}
and they are called \emph{$(\varepsilon,k)$-wise independent} iff
$$ \vert \Prob{}{X_{i_1}=l_1 \wedge \ldots \wedge X_{i_k} = l_k} - 2^{-k} \vert \leq \varepsilon. $$
\end{definition}
The BCH scheme introduced by Alon et. al \cite{Alon86} is a construction of $k$-wise independent random variables $X_0,\ldots,X_{2^n-1}$ that only needs $\lfloor k/2 \rfloor n+1$ independent random bits and works as follows: Let $r_n \in \bset$ be a random bit, $r^{(j)} \in \bset^n$ for $1 \leq j \leq l$ be $l$ uniformly random row vectors, and let the row vector $r = \left[ r^{(1)}, \ldots, r^{(l)} \right] \in \bset^{ln+1}$ be the concatenation of the vectors. For $0 \leq i \leq 2^n-1$ define
$ X_i = IP_{ln+1}\left(r, \left[ \left[ i \right]_2, \left[ i^3 \right]_2, \ldots, \left[ i^{2l-1} \right]_2 \right] \right) \oplus r_n $
where $i^{2j-1}$ for $j=1,\ldots,l$ is computed in the finite field $GF(2^{n})$. This scheme generates $2l+1$-wise independent random bits \cite{Alon86} (if we exclude $X_0$ and if $r_n$ is dropped we obtain $2l$-wise independence). 

%The number of random bits used in BCH schemes is very close to the lower bound of Chlor et al. \cite{Chlor}. Naor and Naor introduced the notion of almost $k$-wise independent random variables to be able to further reduce the number of random bits.

%\begin{definition}[Almost $k$-wise Independence]
%Let $X_1,\ldots,X_m$ be $m$ binary random variables. These variables are called \emph{$(\varepsilon,k)$-wise independent} with $k \leq m$ if and only if for all $1 \leq i_1 < \ldots i_k \leq m$ and for all $l_1,\ldots,l_k \in \bset$
%$$ \vert \Prob{}{X_{i_1}=l_1 \wedge \ldots \wedge X_{i_k} = l_k} - 2^{-k} \vert \leq \varepsilon. $$
%\end{definition}

We say a function $f_r: \bset^n \rightarrow \bset$ is a $k$-wise ($(\varepsilon,k)$-wise) independent random function iff the random variables $X_i = f_r(\left[ i \right]_2)$ with $0 \leq i \leq 2^n-1$ are $k$-wise ($(\varepsilon,k)$-wise) independent. The BCH scheme gives us an intuition of the complexity of an OBDD representing a $k$-wise independent function: For $k \leq 3$ the random variables of the BCH scheme are
$X_i = IP_{ln+1}\left(r, \left[ i \right]_2 \right) \oplus r_n$
which is basically a simple inner product of two binary vectors. For $k \geq 4$, \ie $l \geq 2$, we have to multiply in a finite field to generate the random variables. Since multiplication is hard for OBDDs it seems likely that $k$-wise independent functions for $k \geq 4$ are also hard.

%Finally, we need some tail bounds for $k$-wise independent random variables to prove our lower bounds in the next section. Let $X = \sum_{i=0}^N X_i$ be a sum of $k$-wise independent variables with $k \geq 2$. It is easy to verify that $Var\left[ X \right] = \sum_{i=0}^N Var\left[ X_i \right]$. Therefore, if we can calculate or bound the variance of every $X_i$ we can use Chebyshev\'{}s inequality:
%$$ \Prob{}{\vert X - E\left[ X \right] \vert \geq \delta} \leq \dfrac{Var\left[ X \right]}{\delta^2} = \dfrac{\sum_{i=0}^N Var\left[ X_i \right]}{\delta^2}. $$

\section{OBDD Size of $k$-wise Independent Random Functions}
\label{sec:compl}

%\subsection*{OBBD Size of $k$-wise Independent Random Functions}
We start with some upper bounds on the OBDD size of $3$-wise independent random functions. Notice that by means of the BCH scheme it is not possible to construct a pairwise independent function (which is not $3$-wise independent) since $X_0 = IP(r,0^n) = 0$ for every $r \in \bset^n$.
\begin{theorem}
\label{th:obdd_construction}
Let $\varepsilon > 0$, $n \in \mathbb{N}$, $p$ be a probability with $1/2^n \leq p \leq 1/2$, and $\pi$ be a variable order on the input variables $\lbrace x_0, \ldots, x_{n-1} \rbrace$. Define $p(x) := \Prob{r\in \bset^{n+1}}{f_r(x) = 1}$.
\begin{enumerate}
\item We can construct an $\pi$-OBDD representing a $3$-wise independent function $f_r: \bset^n\rightarrow \bset$ in time $O(n)$ such that $p(x) = 1/2 $ for every $x \in \bset^n$, and the size of the $\pi$-OBDD is $O(n)$ with width $2$ for every $r \in \bset^{n+1}$ (see Algorithm \ref{alg:implrandfunc}).
\item We can construct an $\pi$-OBDD representing a $3$-wise independent function $f_r: \bset^n \rightarrow \bset$ in time $O(\frac{n}{p \cdot \varepsilon})$ such that $ \frac{\lceil p \cdot 2^n \rceil}{2^n} \leq p(x) \leq (1+\varepsilon) \cdot \frac{\lceil p \cdot 2^n \rceil}{2^n}$ for every $x \in \bset^n$, and the size of the $\pi$-OBDD is bounded above by $O(\frac{n}{p \cdot \varepsilon})$ for every $r\in \bset^{n+1}$.
\item We can compute a function $g_A(x): \bset^n \leftarrow \bset$ for a random matrix $A \in \bset^{n} \times \bset^n$ such that $\Prob{A}{g_A(x) = 1} = \dfrac{\lceil p \cdot 2^n \rceil}{2^n}$ using $O(n)$ functional operations. Furthermore, using also $O(n)$ functional operations we can compute a priority function $GT_A(x,y)$, which is equal to $1$ iff $\vert v_x \vert > \vert v_y \vert$, where $v_z \in \bset^n$ for all $z \in \bset^n$ and $(v_0, \ldots, v_{2^n-1})$ is a pairwise independent random permutation of $\bset^n$. Note: This construction is also possible using only $2n$ random bits by computing single bits of $a_0+ \vert x \vert \cdot a_1$ in $\mathbb{F}_{2^n}$.
\end{enumerate}
\end{theorem}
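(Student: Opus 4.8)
\emph{Part (1).} The plan is to instantiate the BCH scheme of Alon et al.\ with $l=1$. As recorded in Section~\ref{sec:pre}, for $l=1$ the variable $X_i = IP(r',[i]_2)\oplus r_n$, read as a function of $x$, is just the affine form $f_r(x)=\bigl(\bigoplus_{i:\,r'_i=1}x_i\bigr)\oplus r_n$ over $GF(2)$, with $r=(r',r_n)\in\bset^{n+1}$. Three things then remain. First, $3$-wise independence is exactly the guarantee of the BCH $l=1$ scheme, so nothing new is needed. Second, $p(x)=1/2$ for every $x$: having fixed $r'$, the bit $f_r(x)$ is a constant XOR'ed with the independent uniform bit $r_n$. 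Third, size and width: after reading any prefix of the variables in the order $\pi$, the induced subfunction of an affine form is completely determined by the parity accumulated so far, so there are at most two subfunctions per level, and Theorem~\ref{thm:minimal_obdd} yields width $2$ and size $O(n)$ for \emph{every} $r$ and every $\pi$. Algorithm~\ref{alg:implrandfunc} merely writes this OBDD down level by level in time $O(n)$.

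\emph{Part (2).} Set $q:=\lceil p\,2^n\rceil/2^n$, choose $m$ minimal with $2^{-m}\le q\varepsilon$ (so $2^m=\Theta(1/(p\varepsilon))$, using $q\ge p$ and $q\le 2p$), and put $t:=\lceil q\,2^m\rceil\le 2^m$, so that $t/2^m\in[q,(1+\varepsilon)q]$. The construction is to take a random affine map $H\colon\bset^n\to\bset^m$, $H(x)=Lx\oplus c$ with uniform $L\in\bset^{m\times n}$, $c\in\bset^m$, and let $f_r(x):=1$ iff $|H(x)|<t$; then $p(x)=t/2^m$ for every $x$. For $3$-wise independence, the key point is that for any three \emph{distinct} points $x,y,z$ the lifted vectors $(x,1),(y,1),(z,1)\in\bset^{n+1}$ are linearly independent over $GF(2)$ — a nontrivial dependency would, by its last coordinate, involve exactly two of them and force $x,y,z$ to have a repeat — hence $(H(x),H(y),H(z))$ is uniform on $(\bset^m)^3$, and the fixed threshold applied coordinate-wise gives three independent $\mathrm{Bernoulli}(t/2^m)$ bits. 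For the size bound, in any order $\pi$ the subfunction of $f_r$ after a prefix of $x$ depends only on the $m$ accumulated row-parities of $Lx$, so the width is at most $2^m=O(1/(p\varepsilon))$ and the size $O(n/(p\varepsilon))$; the OBDD is produced directly (or by synthesising the $m$ width-$2$ affine forms) in time $O(n/(p\varepsilon))$.

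\emph{Part (3).} Here the plan is to use a pairwise independent affine permutation. Take $a_0\in\mathbb{F}_{2^n}$ uniform and $a_1\in\mathbb{F}_{2^n}\setminus\{0\}$ uniform, identify $\bset^n$ with $\mathbb{F}_{2^n}$ via the bit representation, and set $v_x:=a_0+|x|\cdot a_1$. Standard $2$-universality gives that the $v_x$ are pairwise independent, each uniform on $\bset^n$, and $x\mapsto v_x$ is a bijection because $a_1\neq 0$, so $(v_0,\dots,v_{2^n-1})$ is a pairwise independent random permutation. Now put $g_A(x):=1$ iff $|v_x|<\lceil p\,2^n\rceil$, which has $\Pr_A[g_A(x)=1]=\lceil p\,2^n\rceil/2^n$, and $GT_A(x,y):=1$ iff $|v_x|>|v_y|$. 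Each bit of $v_x$ is an affine form in $x$ (a row of the ``multiply by $a_1$'' matrix together with a bit of $a_0$) and hence has a width-$2$ OBDD that can be written down directly within the $O(n^2)$ time the framework allows outside of functional operations (using a sparse irreducible modulus for $\mathbb{F}_{2^n}$ the forms are moreover related by shifts, so they may also be produced incrementally). Feeding the $n$ forms for $v_x$ and the $n$ for $v_y$ into the most-significant-bit-first recursion $G_k=(u_k\wedge\overline{v_k})\vee\bigl((u_k\equiv v_k)\wedge G_{k-1}\bigr)$, which costs $O(1)$ synthesis operations per level, produces $GT_A$ with $O(n)$ functional operations, and the analogous comparison against the constant $\lceil p\,2^n\rceil$ produces $g_A$; a random $n\times n$ matrix works the same way at the cost of $n^2$ instead of $2n$ random bits, as in the Note.

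\emph{Main obstacle.} Parts~(1) and~(3) are, as regards correctness, bookkeeping on top of the known BCH and universal-hashing facts; the one genuinely delicate point is keeping the construction in Part~(3) at $O(n)$ — rather than $O(n^2)$ — functional operations, which is what forces the incremental, shift-based construction of the $n$ affine bit-forms (and a charge of the remaining linear work to the permitted additional time). The crux of Part~(2) is that three requirements must be met at once: the marginal has to be pinned inside $[q,(1+\varepsilon)q]$, exact $3$-wise independence must survive the non-linear thresholding step, and the $\pi$-OBDD width must stay $O(1/(p\varepsilon))$ for \emph{every} $r$ and \emph{every} order $\pi$; the linear independence of the lifted points $(x,1),(y,1),(z,1)$ is precisely what reconciles the first two with the third.
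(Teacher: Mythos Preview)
Your proof is correct. Parts~(1) and~(2) follow essentially the paper's own route: the BCH $l=1$ scheme for Part~(1), and for Part~(2) roughly $\log(1/(p\varepsilon))$ independent affine bits followed by a threshold test, with the width bounded by the number of possible partial-parity tuples. Your argument for $3$-wise independence in Part~(2) via linear independence of $(x,1),(y,1),(z,1)$ over $GF(2)$ is more explicit than the paper's (which simply notes that each coordinate $c_i$ is produced by a BCH scheme and that the coordinates are mutually independent, hence the thresholded bit inherits $3$-wise independence); both arguments are valid and yield the same construction.

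Part~(3) is where you genuinely diverge. The paper encodes all $n$ bit-functions of $v_x=Ax$ into a \emph{single} function $f_A(x,i)$ that first reads the index $i$ and then evaluates $IP(a_i,x)$; from this it obtains
\[
GT_A(x,y)=\exists i:\,f_A(x,i)\wedge\overline{f_A(y,i)}\wedge\bigl(\forall j>i:\,f_A(x,j)\Leftrightarrow f_A(y,j)\bigr),
\]
and $g_A$ similarly against the constant threshold, so the $O(n)$ functional operations come from the quantifications over $i$ and $j$. You instead materialise the $n$ bits of $v_x$ (and of $v_y$) as separate width-$2$ OBDDs and combine them with the MSB-first greater-than recursion $G_k=(u_k\wedge\overline{v_k})\vee((u_k\equiv v_k)\wedge G_{k-1})$, again $O(n)$ syntheses. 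The paper's approach is compact (one auxiliary function), at the price of introducing an index variable; yours stays on the original $x,y$ variables throughout, which is in the spirit of the paper's concern with keeping the variable count small. You also foreground the $\mathbb{F}_{2^n}$ construction from the Note rather than the random-matrix construction the paper treats first; this has the side benefit that $x\mapsto v_x$ is genuinely a bijection (since $a_1\neq 0$), whereas for uniform $A$ the map $x\mapsto Ax$ need not be, and $v_0=0$ deterministically.
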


\begin{figure}
\begin{center}
\begin{tikzpicture}
	\node (x00)   at (2,4)   [inner] {$x_0$};
	\node (x01-1) at (1,3.2)   [inner] {$x_2$}; 
	\node (x01-2) at (3,3.2)   [inner] {$x_2$}; 
	\node (x02-1) at (1,2)   [inner] {$x_3$}; 
	\node (x02-2) at (3,2)   [inner] {$x_3$}; 
	\node (x03-1) at (1,0.8)   [inner] {$x_5$}; 
	\node (x03-2) at (3,0.8) 	[inner] {$x_5$}; 

	\node (00)    at (1.3,-0.3)[sink]  {$0$};
	\node (01)    at (2.7,-0.3)[sink]  {$1$};

	\draw [zero] (x00) to (x01-1);
	\draw [one]  (x00) to (x01-2);
	
	\draw [zero] (x01-1) to (x02-1);
	\draw [one] (x01-1) to (x02-2);
	
	\draw [zero] (x01-2) to (x02-2);
	\draw [one] (x01-2) to (x02-1);	

	\draw [zero] (x02-1) to (x03-1);
	\draw [one] (x02-1) to (x03-2);
	
	\draw [zero] (x02-2) to (x03-2);
	\draw [one] (x02-2) to (x03-1);
	
	\draw [zero] (x03-1) to (00);
	\draw [one] (x03-1) to (01);
	
	\draw [zero] (x03-2) to (01);
	\draw [one] (x03-2) to (00);	
	
	\node (x0)   at (6,4.2)   [inner] {$x_0$};
	\node (y0) at (6.2,3)   [inner] {$y_0$}; 
	\node (x1-1) at (5,2)   [inner] {$x_1$}; 
	\node (x1-2) at (7,2)   [inner] {$x_1$}; 
	\node (y1-1) at (5,0.8)   [inner] {$y_1$}; 
	\node (y1-2) at (7,0.8) 	[inner] {$y_1$}; 

	\node (0)    at (5.3,-0.3)[sink]  {$0$};
	\node (1)    at (6.7,-0.3)[sink]  {$1$};

	\draw [zero] (x0) to (x1-1);
	\draw [one]  (x0) to (y0);
	
	\draw [zero] (y0) to (x1-1);
	\draw [one]  (y0) to (x1-2);
	
	\draw [zero,bend left] (x1-1) to (0);
	\draw [one]  (x1-1) to (y1-1);	

	\draw [zero,bend right] (x1-2) to (1);
	\draw [one]  (x1-2) to (y1-2);
	
	\draw [zero] (y1-1) to (0);
	\draw [one]  (y1-1) to (1);
	
	\draw [zero] (y1-2) to (1);
	\draw [one]  (y1-2) to (0);
	
\end{tikzpicture}
\end{center}
\caption{Two $\pi$-OBBDs with $\pi = (x_0,y_0, \ldots, x_{n-1},y_{n-1})$ for the functions $IP_6(x,y)$ where $y$ is replaced by the constant vector $(1,0,1,1,0,1)$ and $IP_2(x,y)$.}\label{fig:obdd_ip}
\end{figure}
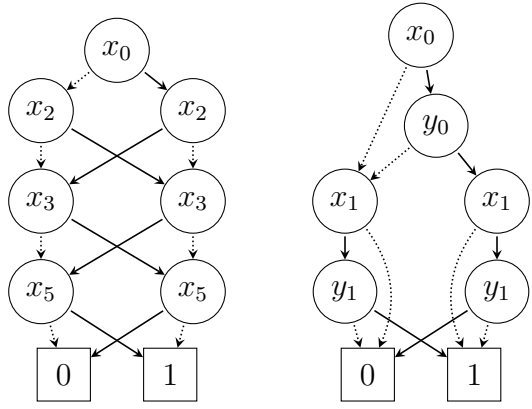

\begin{proof}
1. This is an implication of the BCH scheme for $3$-wise independent random bits. Recall that for random $r = (r_0, \ldots, r_{n}) \in \bset^{n+1}$ the random variables $X_i(r) = IP(r, \left[ 1,\left[ i \right]_2 \right])$ for $0 \leq i \leq 2^n-1$ are $3$-wise independent and $\Prob{}{X_i = 1} = 1/2$ for every $i$. We define $f_r(x) = X_{\vert x \vert}(r)$ (see Algorithm \ref{alg:implrandfunc}). As described in the preliminaries, the function $IP$ can be represented by an OBDD of width $2$ and size $O(n)$ for any variable order if one input vector is replaced by a constant vector. The construction of the OBDD is straightforward (see, \eg Fig. \ref{fig:obdd_ip}) and can be done in time $O(n)$.

2. Let $s \in \bset^n$ be the binary representation of $\lceil p \cdot 2^n \rceil$, \ie $\vert s \vert = \lceil p \cdot 2^n \rceil$. In order to approximate the probability $p$, we compute $t = \lceil - \log p - \log \varepsilon \rceil$ random bits $c_{n-1}(x),\ldots,c_{n-t}(x)$ with $c_i(x) = IP(x,r^{(i)}) \oplus r^{(i)}_n$ where the $r^{(i)} \in \bset^n$ and $r^{(i)}_n \in \bset$ are chosen independently uniformly at random. Now, our random function $f_r(x)$ is equal to $1$ iff $\vert c_{n-1} \cdots c_{n-t} 0^{n-t} \vert \leq \vert s \vert$. In order to construct the OBDD for $f_r$, we simulate the $t$ OBDDs representing $c_i$ on input $x \in \bset^n$ in parallel: Since all OBDDs have width $2$, we can represent the states of the OBDDs representing $c_{n-1}, \ldots, c_{n-t}$ after reading the same $k$ input variables with at most $2^t$ OBDD nodes for each $1 \leq k \leq n$. After reading all $n$ input bits, we know the values of $c_{n-1}(x), \ldots, c_{n-t}(x)$ and can easily decide whether $\vert c_{n-1} \cdots c_{n-t} 0^{n-t} \vert \leq \vert s \vert$ because $s$ is a constant. Therefore, the overall OBDD size is $O(n \cdot 2^t) = O(\frac{n}{p \cdot \varepsilon})$.
%Thus, we define
%$$ f_r(x) = \bigwedge\limits_{i=n-t}^{n-1} (s_i \vee \overline{c_i(x)}). $$
%Since $s$ is a constant and the OBDD for every $c_i$ has size $O(n)$ and width $2$, the resulting OBDD size is $O(n \cdot 2^t) = O(\frac{n}{p \cdot \varepsilon})$ (see Lemma \ref{lem:widthsynthesis}). 
Each $c_i(x)$ is generated by a BCH scheme for $3$-wise independent random bits and $c_i(x)$ and $c_j(x)$ are independent for $i \neq j$. Therefore, $f_r(x)$ is also $3$-wise independent.

Let $p' \in (0,1)$ such that $2^n \cdot p'$ is the value of the binary number consisting of the first $t$ most significant bits of $s$ followed by $n-t$ ones, \ie  $2^n \cdot p' = \vert s_{n-1} \cdots s_{n-t} 1^{n-t} \vert$. The function $f_r$ ignores the $n-t$ least significant bits of $s$, therefore, it is equivalent to choose a random vector $v \in \bset^n$ and check whether $\vert v \vert \leq 2^n \cdot p'$ which means that the probability of $f_r(x) = 1$ is $p'$. It is
$$\vert s \vert \leq p' \cdot 2^n \leq \vert s \vert + 2^{n-t}-1 \leq \vert s \vert + \varepsilon \cdot p \cdot 2^{n} \leq (1+\varepsilon) \vert s \vert $$
and thus 
$$ \dfrac{\lceil p \cdot 2^n \rceil}{2^n} \leq \Prob{}{f_r(x) = 1} = p' \leq (1+\varepsilon) \cdot \dfrac{\lceil p \cdot 2^n \rceil}{2^n}. $$
3. For a random matrix $A \in \bset^n \times \bset^n$ define $r_i(x) = a_i^T \cdot x = IP(a_i,x)$ where $a_i$ is the $i$-th column vector of $A$. Let $f_A(x,i) = 1$ iff $r_i(x) = 1$. An OBDD reading the bits of $i$ first and then computing the corresponding value of $r_i(x)$ has a size of $O(n^2)$ since each $r_i(x)$ can be computed by an OBDD with width $2$ (TODO: experiments suggest rather size of $O(n)$). This OBDD can also be constructed in time $O(n^2)$. Let $v_x = Ax$ for $x \in \bset^n$. Now define $GT_A(x,y) = 1$ iff $\vert v_x \vert > \vert v_y \vert$, \ie $$ GT_A(x,y) = \exists i: f_A(x,i) \wedge \overline{f_A(y,i)} \wedge (\forall j: (j > i) \Rightarrow (f_A(x,i) \Leftrightarrow f_A(y,i))).$$
Since each component of $v_x$ and $v_y$ are pairwise independent, the entire random vectors are also pairwise independent. 

The function $g_A(x)$ can be constructed in the same way by replacing $f_A(y,i)$ with $b(i)$ which is equal to the $i$-th bit of $\lceil p \cdot 2^n \rceil$ and a negation of the resulting function.
\end{proof}

\begin{algorithm}[t]
\caption{RandomFunc(x,n)}\label{alg:implrandfunc}
\algorithmicrequire{Variable vector $x$ of length $n \in \mathbb{N}$}\\
\algorithmicensure{$3$-wise independent function $r(x)$}
\begin{algorithmic}
\STATE Let $r_0, \ldots, r_{n}$ be $n+1$ independent random bits
\STATE $f_r(x) = \bigoplus_{i=0}^{n-1} (r_i \wedge x_i) \oplus r_n$
\RETURN $f_r(x)$
\end{algorithmic}
\end{algorithm}

Can we also construct small OBDDs for $k$-wise independent random variables with $k \geq 4$? Unfortunately, this is not possible.

\begin{theorem}
\label{th:lower_bound_fixedorder}
Let $X_0,\ldots,X_{2^n-1} : S \rightarrow \bset$ be $k$-wise independent $0/1$-random variables over a sample space $S$ with $\Prob{}{X_j = 1} = p$ for all $0 \leq j \leq 2^n-1$ and $k \geq 4$. For every $s \in S$ let $f_s: \bset^n \rightarrow \bset$ be defined by $f_s(x) := X_{\vert x \vert}(s)$. Then, for a fixed variable order $\pi$, the expected $\pi$-OBDD size of $f_s$ is bounded below by $\Omega(2^{n/3} \cdot (p')^{(1/3)})$ with $p' = 2p(1-p)$.
\end{theorem}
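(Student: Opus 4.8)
The plan is to lower-bound the expected $\pi$-OBDD size of $f_s$ by exhibiting, with constant probability over $s \in S$, a "large" set of pairwise distinct subfunctions at some level of the order. By Theorem~\ref{thm:minimal_obdd}, if many assignments to the first $m$ variables in the order $\pi$ produce distinct subfunctions, then the OBDD has many nodes; so it suffices to show that for $m \approx 2n/3$ (or some fraction of $n$), a constant fraction of the $2^m$ prefix-assignments lead to distinct subfunctions on the remaining $n-m$ variables with decent probability. Concretely, I would fix the first $m = \lceil 2n/3 \rceil$ variables in $\pi$ as the "top" block $U$ and the remaining $n - m$ as the "bottom" block $W$; a subfunction after setting $U = u$ is the restriction of $f_s$ to the $2^{n-m}$ indices whose top bits equal $u$. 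Two prefixes $u \neq u'$ give the \emph{same} subfunction only if $X_{\vert(u,w)\vert} = X_{\vert(u',w)\vert}$ for all $w \in \bset^{n-m}$ — i.e. $2^{n-m}$ independent (or $k$-wise independent) equality events must all hold.

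The key computation is to bound, for a fixed pair of distinct prefixes $u, u'$, the probability $q := \iProb{\,\forall w:\ X_{\vert(u,w)\vert} = X_{\vert(u',w)\vert}\,}$. Pair up the $2^{n-m}$ indices $(u,w)$ with their partners $(u',w)$; since $k \ge 4$, any two such disjoint pairs of indices are jointly $4$-wise independent, so the events $\{X_{\vert(u,w)\vert} = X_{\vert(u',w)\vert}\}$ over distinct $w$ are pairwise independent, each with probability exactly $\iProb{X_a = X_b} = p^2 + (1-p)^2 = 1 - 2p(1-p) = 1 - p'$. Using $k$-wise independence to get a genuine product bound over a batch of $\lfloor k/2\rfloor$ disjoint pairs at a time, we get $q \le (1-p')^{\lfloor (2^{n-m})/?\rfloor}$ — more carefully, I would just take $j := \lfloor k/2 \rfloor \ge 2$ disjoint pairs, note those $2j \le k$ indices are jointly independent, so the probability that all $j$ equalities hold is $(1-p')^{j}$; since there are $\lfloor 2^{n-m}/1\rfloor$ values of $w$ and we can form $\lfloor 2^{n-m}\rfloor$ such equalities, the probability that \emph{all} of them hold is at most $(1-p')^{\lfloor 2^{n-m}/1 \rfloor}$... the cleanest honest bound uses just $j=2$ disjoint pairs to get pairwise independence and then a second-moment / union argument. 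Let me instead argue: the expected number of "collapsed" prefix-pairs $(u,u')$ is at most $\binom{2^m}{2} \cdot q \le 2^{2m} (1-p')^{2^{n-m}/2}$ (grouping the $2^{n-m}$ equalities into disjoint pairs of pairs, each group of $4$ indices independent). We want this $\ll 2^m$, so that by Markov a constant fraction of prefixes survive as distinct subfunctions. This forces $2^{2m}(1-p')^{2^{n-m}/2} \le 2^m / 4$, i.e. roughly $m \lesssim c\cdot 2^{n-m} \cdot p'$ (using $\ln(1/(1-p')) \ge p'$); choosing $m$ so that $2^{n-m} \approx (m)/p'$ balances things, giving $m = n - \log(n/p') \pm O(1)$, and then the number of surviving distinct subfunctions is $\Omega(2^m) = \Omega(2^n p' / n)$. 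Hmm — that yields $\Omega(2^n p'/n)$, which is stronger than the claimed $\Omega(2^{n/3}(p')^{1/3})$; the weaker stated bound is presumably what one gets by a cruder argument, so I would simply run the balancing at $m = n/3$: then a fixed pair collapses with probability at most $(1-p')^{2^{2n/3}/2}$, the expected number of collapsed pairs among $2^{2n/3}$ pairs is $\le 2^{4n/3}(1-p')^{2^{2n/3}/2}$ which is $o(1)$ for $p' \ge 2^{-n/3}$-ish, so with probability $1-o(1)$ \emph{all} $2^{n/3}$ prefix-subfunctions are distinct, giving OBDD size $\ge 2^{n/3}$; to get the $(p')^{1/3}$ factor one pushes $m$ slightly past $n/3$ until the collapse probability becomes constant, landing at size $\Theta(2^{n/3}(p')^{1/3})$.

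I would then organize the writeup as: (i) fix the order, split into top block $U$ of size $m$ to be chosen and bottom block $W$; (ii) for distinct $u,u'$, bound the collapse probability $q$ using $4$-wise independence by grouping the $2^{n-m}$ equality events into groups of two (four indices, jointly independent) and multiplying, giving $q \le (1-p'^2 \text{ or } (1-p')^2)^{\lfloor 2^{n-m}/2\rfloor}$ — I'd be careful to use that $\iProb{X_a=X_b \wedge X_c=X_d} = (1-p')^2$ when $a,b,c,d$ distinct; (iii) take expectations: $\mathbb{E}[\#\text{collapsed pairs}] \le 2^{2m} q$; (iv) choose $m$ to make this $\le \tfrac12 2^m$ (say), so that by Markov with probability $\ge \tfrac12$ at least $\tfrac12 2^m$ distinct subfunctions survive, forcing $\pi$-OBDD size $\ge \tfrac14 2^m$ on that event, hence expected size $\ge \tfrac18 2^m$; (v) solve the inequality for $m$ and simplify to $\Omega(2^{n/3}(p')^{1/3})$, using $1-p' \le e^{-p'}$.

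The main obstacle I anticipate is handling the independence bookkeeping honestly: the $2^{n-m}$ equality events over different $w$ are only \emph{pairwise} independent under $4$-wise independence, not mutually independent, so I cannot simply raise $(1-p')$ to the power $2^{n-m}$. The fix — grouping into blocks of two pairs (needing $4$-wise independence to make each block of four indices jointly uniform) and multiplying across blocks — only recovers the exponent $2^{n-m}/2$, but that loses only a constant factor in the exponent and is exactly why the final bound carries a cube root and a factor involving $p'$ rather than being $2^{n}$-ish; making sure the arithmetic of "choose $m$, take expectations, apply Markov" produces precisely $2^{n/3}(p')^{1/3}$ (and checking the hypothesis $p \le 1/2$, equivalently $p' \le 1/2$, is used only to keep $\log(1/p')$ from having the wrong sign) is the routine-but-delicate part. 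A secondary subtlety: one must ensure the collapse-probability bound holds uniformly over the (unknown) order $\pi$, but since we only used that $U$ has size $m$ and never which specific variables it contains, the argument is order-oblivious and goes through for every fixed $\pi$.
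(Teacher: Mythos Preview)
Your overall framework---fix the top $m$ variables, bound the probability that two prefix-assignments yield the same subfunction, then count distinct subfunctions---matches the paper's. The gap is in your bound on the collapse probability $q = \iProb{\forall w:\ X_{(u,w)} = X_{(u',w)}}$. You claim that by grouping the $2^{n-m}$ equality events into blocks of two pairs (four indices each) and ``multiplying across blocks'' you recover $q \le (1-p')^{2^{n-m}/2}$. This step is invalid: with only $4$-wise independence, the four indices \emph{within} one block are jointly independent, but different blocks are \emph{not} independent of each other, so you cannot multiply their probabilities. The exponential-in-$2^{n-m}$ bound is simply unavailable under $4$-wise independence, and this is exactly why your calculation produced the suspiciously strong $\Omega(2^n p'/n)$---a bound that, if true, would contradict known constructions of $4$-wise independent spaces with polynomial support.

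The paper's fix is the one you briefly mention and then abandon: a second-moment argument. Define $D(w) = \mathbf{1}[X_{(u,w)} \ne X_{(u',w)}]$ and $D = \sum_w D(w)$. Each $D(w)$ depends on two of the $X_i$, so under $4$-wise independence the $D(w)$ are \emph{pairwise} independent; hence $\mathrm{Var}[D] = \sum_w \mathrm{Var}[D(w)] = 2^{n-m} p'(1-p')$, while $E[D] = 2^{n-m} p'$. Chebyshev then gives
\[
q = \iProb{D=0} \le \iProb{|D - E[D]| \ge E[D]} \le \frac{\mathrm{Var}[D]}{E[D]^2} \le \frac{1}{2^{n-m} p'}\,.
\]
This is only polynomially small, and feeding it into the expected-number-of-collapsed-pairs calculation $\binom{2^m}{2} \cdot q \le 2^{2m}/(2^{n-m} p')$ forces $3m \le n + \log p'$ for the expectation to be at most~$1$, which is where the cube root $2^{n/3}(p')^{1/3}$ comes from. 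The weakness of the Chebyshev bound is not slack in the analysis; it is the genuine cost of having only bounded independence.
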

\begin{proof}
For the sake of simplicity we omit the index of the function $f_r$. W.l.o.g. let $\pi$ be the identity order, i.e. $\pi(i) = i$ for all $i=0,\ldots,n-1$. For $l \in \lbrace 1,\ldots,n \rbrace$ and $\alpha \in \bset^l$ let $f_\alpha: \bset^{n-l} \rightarrow \bset$ be the subfunction of $f$ where the first $l$ variables $x_0,\ldots,x_{l-1}$ are fixed according to $\alpha$, i.e. $f_\alpha(z) := f_{\mid x_0=\alpha_0,\ldots,x_{l-1}=\alpha_{l-1}}(z)$. Now we fix two different assignments $\alpha,\alpha'$ and define $2^{n-l}$ random variables $D(z) := D_{\alpha,\alpha'}(z)$ such that $D(z) = 1$ iff $f_{\alpha}(z) \neq f_{\alpha'}(z)$. Since the function values of $f$ are also $k$-wise independent, for every $z \in \bset^{n-l}$ we have 
$ E\left[ D(z) \right] = 2p(1-p) := p'$ and $ Var\left[ D(z) \right] = E\left[ D(z)^2 \right] - E\left[ D(z) \right]^2 = E\left[ D(z) \right] - E\left[ D(z) \right]^2 = p'(1-p')$.
Let $D = \sum_z D(z)$. We want to find an upper bound on the number of pairs $(\alpha,\alpha')$ with $f_{\alpha} = f_{\alpha'}$. The probability that for fixed $(\alpha,\alpha')$ the subfunctions are equal is bounded above by the probability that the difference between $D$ and $E\left[D\right]$ is at least $E\left[D\right]$, i.e.
$ \Prob{}{f_\alpha = f_{\alpha'}} = \Prob{}{D = 0} \leq \Prob{}{\vert D - E\left[D\right] \vert \geq E\left[D\right]}$.
Each random variable $D(z)$ depends on two function values, i.e. these variables are $k' = \lfloor k/2 \rfloor$-wise independent. Since $k' \geq 2$ we can use Chebyshev\'{}s inequality
$$ \Prob{}{f_\alpha = f_{\alpha'}} \leq \dfrac{Var\left[D\right]}{E\left[D\right]^2} = \dfrac{\sum_z Var\left[D(z)\right]}{(2^{n-l}\cdot p')^2} = \dfrac{2^{n-l}\cdot p' \cdot (1-p')}{(2^{n-l}\cdot p')^2} \leq \dfrac{1}{2^{n-l} \cdot p'} $$
Hence, the expected number of pairs $(\alpha,\alpha')$ with $f_\alpha = f_\alpha'$ is bounded above by 
$ \frac{\binom{2^{l}}{2}}{2^{n-l} \cdot p'} \leq \frac{2^{2l}}{2^{n-l} \cdot p'}$.
Therefore, the expected number $t_l$ of functions which are equal can be bounded above by 
$ \sqrt{\frac{2^{2l}}{2^{n-l} \cdot p'}} = \frac{2^l}{\sqrt{2^{n-l} \cdot p'}} $.
The number $s_l$ of different subfunctions $f_\alpha$ is bounded below by $2^l$ divided by an upper bound on the number $t_l$ of subfunctions $f_\alpha$ which are equal, \ie
$ E\left[ s_l \right] \geq E\left[ \frac{2^l}{t_l} \right] \geq \frac{2^l}{E\left[ t_l \right]} $
where the last inequality is due to Jensen's inequality and the fact that $g(x) = x^{-1}$ is convex on $(0,\infty)$.
The expected number of equal subfunctions can be lower than $1$, therefore we have to do a case study:
\begin{enumerate}
\item $ \dfrac{2^l}{\sqrt{2^{n-l} \cdot p'}} \leq 1 \Leftrightarrow l \leq (1/3)(n+\log(p'))$: All $2^l$ subfunctions are different (in expectation).
%\begin{eqnarray*}
% \dfrac{2^l}{\sqrt{2^{n-l} \cdot p'}} \leq 1 & \Leftrightarrow & 2^l \leq \sqrt{2^{n-l+\log(p')}}\\
% & \Leftrightarrow & l \leq (1/2)(n-l+\log(p'))\\
% & \Leftrightarrow & l \leq (1/3)(n+\log(p'))
%\end{eqnarray*}
\item $\dfrac{2^l}{\sqrt{2^{n-l} \cdot p'}} > 1$: The number of different subfunctions is at least 
$ \frac{2^l \cdot \sqrt{2^{n-l} \cdot p'}}{2^l} = \sqrt{2^{n-l} \cdot p'}$.
\end{enumerate}
For the sake of simplicity, we assume that $(1/3)(n+\log(p'))$ is an integer, since this does not affect the asymptotic behavior. 
Due to the first case, we know that the number of different subfunctions has to double after each input bit on the first $(n/3)+\log(p')/3+1$ levels, \ie each node must have two outgoing edges to two different nodes which also means that the all subfunctions essentially depend on the next variable. Therefore, the $\pi$-OBDD has to be a complete binary tree on the first $(n/3)+\log(p')/3+1$ levels and the expected $\pi$-OBDD size is also $\Omega(2^{n/3}\cdot (p')^{1/3})$.

%Summing up, we get an expected lower bound of 
%$$ \sum\limits_{l=0}^{ (n/3)+\log(p')/3 } 2^l + \sum\limits_{l= (n/3)+\log(p')/3 +1}^{n} \sqrt{2^{n-l} \cdot p'} = \Omega(2^{n/3}\cdot (p')^{1/3})
%$$
% \\
% = & 2^{n/3+1} \cdot (p')^{1/3}-1 + \sqrt{p'} \cdot \sum\limits_{i=0}^{(2/3)n-\log(p')/3-1} \sqrt{2^i}\\
%= & 2^{n/3+1} \cdot (p')^{1/3}-1 +  \sqrt{p'} \cdot \dfrac{1-\sqrt{2}^{(2/3)n-\log(p')/3-1}}{1-\sqrt{2}}\\
%= & 2^{n/3+1} \cdot (p')^{1/3}-1 +  \sqrt{p'} \cdot \dfrac{2^{(n/3)-\log(p')/6-1/2}-1}{\sqrt{2}-1}\\
%= & \Omega(2^{n/3}\cdot (p')^{1/3})+\Omega(2^{n/3} \cdot (p')^{1/3}) = \Omega(2^{n/3}\cdot (p')^{1/3})
%\end{array}$$
%on the size of the quasi-reduced $\pi$-OBDD (since in case 2 we do not ensure that the subfunctions have to be essentially dependent on the next variable). But we know (due to the first case) that the number of different subfunctions has to double after each input bit on the first $(n/3)+\log(p')/3+1$ levels, \ie each node must have two outgoing edges to two different nodes which also means that the all subfunctions essentially depend on the next variable. Therefore, the $\pi$-OBDD has to be a complete binary tree on the first $(n/3)+\log(p')/3+1$ levels and the expected $\pi$-OBDD size is also $\Omega(2^{n/3}\cdot (p')^{1/3})$.
\end{proof}
%The last theorem states that we cannot use $k$-wise independent random functions in an implicit algorithm with a fixed variable order and $k \geq 4$. But it is still possible that for every function $f_s(x)$ there is a variable order $\pi$ such that the $\pi$-OBDD representing $f_s$ is small. 
The following theorem shows that $k$-wise independent random functions with $k \geq 4$ are hard even for FBDDs (and with it for OBDDs and all variable orders).
%Wegener \cite{Weg94a} showed that completely independent random functions have a large OBDD size with overwhelming probability. Here, we show that even for $k$-wise independent random functions with $k \geq 4$ there are functions with large OBDD size.
The general strategy of the proof of the next theorem is similar to the proof in \cite{Weg94a} where the OBDD size of completely independent random functions was analyzed: We bound the probability $p_l$ that there is a variable order such that the number of OBDD nodes on level $l$ deviates too much from the expected value. If $\sum_{l = 0}^{n-1} p_l < 1$ holds, then with probability $1-\sum_{l = 0}^{n-1} p_l > 0$ there is no such deviation in any level of the OBDD for all variable orders. The differences lie in the detail: In \cite{Weg94a} the function values are completely independent and, therefore, the calculation can be done more directly and with better estimations. We have to take the detour over the number of subfunctions which are equal (as in Theorem \ref{th:lower_bound_fixedorder}) and can use only Markov's inequality to calculate the deviation of the expectation. Furthermore, because of the independence Wegener \cite{Weg94a} was able to do a more subtle analysis of the OBDD size by investigating the effects of the OBDD minimization rules separately.

\begin{theorem}
\label{th:lower_bound_general}
Let $X_0,\ldots,X_{2^n-1}: S \rightarrow \bset$ be $k$-wise independent $0/1$-random variables over a sample space $S$ with $\Prob{}{X_j = 1} = p$ for all $0 \leq j \leq 2^n-1$ and $k \geq 4$. For every $s \in S$ let $f_s: \bset^n \rightarrow \bset$ be defined by $f_s(x) := X_{\vert x \vert}(s)$. Then, there is an $r$-mixed function $f_s$ with $r = \Omega(n + \log(p')-\log n)$ with $p' = 2p(1-p)$.
%such that the OBDD size is bounded below by $2^{\Omega(n + \log(p')-\log n)}$ with $p' = 2p(1-p)$.
\end{theorem}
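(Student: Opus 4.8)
The plan is to prove the stronger statement that, with positive probability over the choice of $s\in S$, the function $f_s$ is $r$-mixed for some $r=\Omega(n+\log(p')-\log n)$; by Lemma~\ref{lem:kmixed} such an $f_s$ then has FBDD size at least $2^r-1=2^{\Omega(n+\log(p')-\log n)}$, which is what the introduction refers to. The idea is to recycle the collision estimate from the proof of Theorem~\ref{th:lower_bound_fixedorder}, but now simultaneously for every possible ``prefix'' — that is, for every subset of $r$ input variables, rather than for the first $r$ variables of one fixed order only. This is precisely the \enquote{detour over equal subfunctions} together with the \enquote{only Markov's inequality} that the preceding discussion announces.

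Concretely, I would first fix a set $V\subseteq\{x_0,\ldots,x_{n-1}\}$ with $\vert V\vert=r$ and two distinct assignments $\alpha\neq\alpha'\in\bset^r$ to the variables of $V$, and reuse the argument of Theorem~\ref{th:lower_bound_fixedorder} verbatim: for $z\in\bset^{n-r}$ let $D(z)$ indicate $f_{\mid V=\alpha}(z)\neq f_{\mid V=\alpha'}(z)$. Each $D(z)$ is determined by two function values, and over the $2^{n-r}$ values of $z$ all $2^{n-r+1}$ involved indices in $\{0,\ldots,2^n-1\}$ are pairwise distinct, so $k\geq4$ makes the $D(z)$ pairwise independent; hence Chebyshev's inequality applied to $D=\sum_z D(z)$, with $E[D]=2^{n-r}p'$ and $Var[D]=2^{n-r}p'(1-p')$, gives
\[
 \Prob{}{f_{\mid V=\alpha}=f_{\mid V=\alpha'}}=\Prob{}{D=0}\leq\Prob{}{\vert D-E[D]\vert\geq E[D]}\leq\frac{1}{2^{n-r}\cdot p'}.
\]
Letting $T$ be the number of pairs $(V,\{\alpha,\alpha'\})$ with $\vert V\vert=r$ and $f_{\mid V=\alpha}=f_{\mid V=\alpha'}$, the function $f_s$ is $r$-mixed exactly when $T=0$, and by linearity of expectation
\[
 E[T]\leq\binom{n}{r}\cdot\binom{2^r}{2}\cdot\frac{1}{2^{n-r}\cdot p'},
\]
so Markov's inequality yields $\Prob{}{f_s\text{ not }r\text{-mixed}}=\Prob{}{T\geq1}\leq E[T]$.

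It then remains to choose $r$ as large as possible subject to $E[T]<1$. Using $\binom{n}{r}\leq(en/r)^r$ and $\binom{2^r}{2}\leq2^{2r-1}$ and taking logarithms, $E[T]<1$ reduces to an inequality of the shape $r\log(en/r)+3r\leq n+\log(p')$, which is satisfied for some $r=\Omega(n+\log(p')-\log n)$; for such $r$ we have $T=0$ with positive probability, i.e.\ an $r$-mixed $f_s$ exists, and the floor in $\lfloor k/2\rfloor$ and the rounding of $r$ are absorbed without affecting the asymptotics, exactly as in Theorem~\ref{th:lower_bound_fixedorder}. I expect this final optimization to be the main obstacle: since $k\geq4$ buys only \emph{pairwise} independence of the $D(z)$, Chebyshev is the strongest concentration available and the collision probability $1/(2^{n-r}p')$ is merely singly exponentially small in $n-r$, so the super-polynomial union-bound factor $\binom{n}{r}\binom{2^r}{2}$ must be balanced against it very tightly — this balancing is exactly what costs the $-\log n$ term and the weaker constant in the exponent, compared with the fixed-order bound of Theorem~\ref{th:lower_bound_fixedorder}, where no union over prefixes is needed.
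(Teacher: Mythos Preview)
Your proposal is correct and follows the same route as the paper: the Chebyshev-based collision bound $\Prob{}{f_{\mid V=\alpha}=f_{\mid V=\alpha'}}\leq 1/(2^{n-r}p')$ from Theorem~\ref{th:lower_bound_fixedorder}, summed over all $\binom{n}{r}$ choices of $V$ and all $\binom{2^r}{2}$ pairs of assignments, followed by a first-moment/Markov argument to find the largest $r$ making this sum $<1$. The paper additionally union-bounds over all levels $l\leq n$ (incurring an extra factor $(n+1)$, which is the source of the $-\log n$ term) and carries out the final optimization via the entropy bound $\log\binom{n}{l}\leq nH(l/n)$ together with the estimate $3\varepsilon+H(\varepsilon)\leq 6\sqrt{\varepsilon}$ rather than your $(en/r)^r$, but these are cosmetic differences; your direct $\Pr[T\geq1]\leq E[T]$ is in fact a slightly cleaner packaging of the same argument.
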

\begin{proof} 
%We use a similar idea to prove this theorem as in \cite{Weg94a}: We want to bound the probability $p_i$ that there is a variable order such that the number of OBDD nodes on level $i$ deviates from the expected value. We have to ensure that $\sum p_i < 1$ and the deviation is not too large to complete the proof. 
%The two differences to the proof in \cite{Weg94a} are the ways to calculate both the number of OBDD nodes on level $i$ (see Theorem \ref{th:lower_bound_fixedorder}) and the deviation from the expectation. 
First, we bound the probability that the number $t_l$ of subfunctions which are equal deviates by a factor of $\delta_l$ from the expectation. Second, as in Theorem \ref{th:lower_bound_fixedorder}, we show an upper bound on the level $l$ for which the number of equal subfunctions is lower or equal than $1$, \ie the OBDD has to be a complete binary tree until this level.

As we know, the expected number of pairs $(\alpha,\alpha')$ with $f_\alpha = f_\alpha'$ is bounded above by $\mu_l := \frac{2^{2l}}{2^{n-l} \cdot p'}$. Due to the dependencies, using Markov's inequality is the best we can do to bound the deviation from the expectation. Thus, we have
$$ \Prob{}{\text{No. pairs } (\alpha,\alpha')\text{ with }f_\alpha = f_\alpha' \geq \delta_l \cdot \mu_l} \leq \frac{1}{\delta_l}.$$
Due to Theorem \ref{thm:minimal_obdd}, the definition of the subfunctions corresponding to OBDD nodes on level $l$, \ie the definition of the subfunctions $f_\alpha$, depends only on the first $l$ variables with respect to the variable order. Thus, we have to distinguish only $\binom{n}{l}$ possibilities to choose these variables. Let $\delta_l := \binom{n}{l} \cdot (n+1)$. Then the probability, that for all levels and variable orders the number of pairs $(\alpha,\alpha')$ with $f_\alpha = f_{\alpha'}$ is at most $\delta_l \cdot \mu_l$ is bounded below by $1-n/(n+1) > 0$. Note that this also implies that $t_l \leq \sqrt{\delta_l \cdot \mu_l}$ for all levels $l$ and variable orders.

As in the proof of Theorem \ref{th:lower_bound_fixedorder}, the next step consists of the investigation of two cases: $\sqrt{\delta_l \mu_l} \leq 1$ and $\sqrt{\delta_l \mu_l} > 1$. Here, we focus only on the first case. In other words, we compute an upper bound $T$ such that $\sqrt{\delta_l \mu_l} \leq 1$ or, equivalently, $(1/2) \log(\delta_l \mu_l) \leq 0$ for all $l \leq T$. For the calculations, we need a known bound for the binomial coefficient $ \log\binom{n}{k} \leq n \cdot H(k/n)$
where $H(x) = -x \log(x)-(1-x)\log(1-x)$ is the binary entropy function. It holds
\begin{eqnarray*}
%\frac{1}{2}\log(\delta_l \mu_l) & = &  \frac{1}{2}\log\left(\binom{n}{l} \cdot (n+1) \cdot \frac{2^{2l}}{2^{n-l} \cdot p'}\right)\\
\frac{1}{2}\log(\delta_l \mu_l) & \leq & \frac{1}{2}(3l-n+\log(n)+1-\log(p')+ n \cdot H\left(\frac{l}{n}\right)).
%& \leq & \frac{1}{2}(\log\binom{n}{l} + \log(2n) + 2l - (n-l) - \log(p'))\\
%& \leq & \frac{1}{2}(3l-n+\log(n)+1-\log(p')+ n \cdot H\left(\frac{l}{n}\right)).
\end{eqnarray*}
Let $l=\varepsilon \cdot n$ for some $\varepsilon < 1/2$. We want to maximize $\varepsilon$ such that $\log(\sqrt{\delta_l \mu_l}) \leq 0$.
$$\begin{array}{lrcl}
& \dfrac{1}{2}(3(\varepsilon n)-n+\log(n)+1-\log(p')+n \cdot H(\varepsilon)) & \leq &  0\\
\Leftrightarrow & 3\varepsilon+H(\varepsilon) & \leq & 1-\dfrac{\log(1/p')}{n}-\dfrac{1}{n}-\dfrac{\log n}{n}
\end{array}$$
%\Leftrightarrow & (3\varepsilon+H(\varepsilon))n+1+\log(n) & \leq & n+\log(p')\\
Using $\dfrac{1}{1-x} \leq 1+2x$ for $0 \leq x \leq 1/2$ and $\log(1+x) \leq x$ for $x > -1$, we can bound $3\varepsilon+H(\varepsilon)$ by $6 \sqrt{\varepsilon}$ (see Appendix for the details).
%\begin{eqnarray*}
%3\varepsilon+H(\varepsilon) & = & 3\varepsilon + \varepsilon  \log(1/\varepsilon)+(1-\varepsilon) \cdot \log(1/(1-\varepsilon))\\
%& \leq & 3\varepsilon + \varepsilon \sqrt{1/\varepsilon} + \log(1/(1-\varepsilon))\\
%& \leq & 3\varepsilon+\sqrt{\varepsilon} + 2\varepsilon\\
%& \leq & 6 \sqrt{\varepsilon}. 
%\end{eqnarray*}
Thus, if 
$$ \varepsilon \leq \sqrt{\varepsilon} \leq \dfrac{1}{6}-\dfrac{1}{6} \cdot \left(\dfrac{\log(1/p')}{n}+\dfrac{1}{n}+\dfrac{\log n}{n}\right) = \Omega\left(1-\frac{\log(1/p')+\log n}{n}\right), $$
it is $\log(\sqrt{\delta_l \mu_l}) \leq 0$.
Since $l=\varepsilon \cdot n$ and the maximal $\varepsilon$ is in $\Omega\left(1-\frac{\log(1/p')+\log n}{n}\right)$ such that $\log(\sqrt{\delta_l \mu_l}) \leq 0$, there is a function $f_s$ which is $r$-mixed with $r = \Omega(n + \log(p')-\log n)$.
\end{proof}
Due to Lemma \ref{lem:kmixed}, the last Theorem gives us an lower bound even for FBDDs.
%Recalling the proof of Theorem \ref{th:lower_bound_general}, it is easy to see that even the FBDD size has to be exponentially large for some function $f_s$. 
\begin{corollary}
Let $X_0,\ldots,X_{2^n-1}$ be $k$-wise independent $0/1$-random variables over a sample space $S$ with $\Prob{}{X_j = 1} = p$ for all $0 \leq j \leq 2^n-1$ and $k \geq 4$. For every $s \in S$ let $f_s: \bset^n \rightarrow \bset$ be defined by $f_s(x) := X_{\vert x \vert}(s)$. Then, there is a function $f_s$ such that the FBDD size is at least $2^{\Omega(n+\log(p')-\log n)}$.
\end{corollary}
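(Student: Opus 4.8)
The plan is that this corollary is an immediate consequence of two facts already in place: Theorem~\ref{th:lower_bound_general}, which produces a specific sample point $s \in S$ for which $f_s$ is $r$-mixed with $r = \Omega(n + \log(p') - \log n)$, and Lemma~\ref{lem:kmixed} (Jukna's bound), which says that \emph{any} $r$-mixed function has FBDD size at least $2^r - 1$. Chaining the two yields the claimed $2^{\Omega(n + \log(p') - \log n)}$ bound for that same $f_s$, and since every OBDD is in particular an FBDD, this also subsumes Theorem~\ref{th:lower_bound_fixedorder} up to the extra $-\log n$ in the exponent (the price paid for handling all variable orders at once via Markov's inequality and a union bound over the $\binom{n}{l}$ choices of a level-$l$ prefix).

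Concretely I would proceed in three short steps. First, apply Theorem~\ref{th:lower_bound_general} to fix $s$ and the corresponding value $r = \Omega(n + \log(p') - \log n)$ with $p' = 2p(1-p)$. Second, apply Lemma~\ref{lem:kmixed} to $f_s$ to conclude that its FBDD size is at least $2^r - 1$. Third, observe that $2^r - 1 = 2^{\Omega(n + \log(p') - \log n)}$: subtracting a constant from an exponentially large quantity does not change its asymptotic order, and $r$ being $\Omega(n + \log(p') - \log n)$ means $2^r$ is $2^{\Omega(n + \log(p') - \log n)}$ by definition.

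There is essentially no obstacle: all the real work sits in Theorem~\ref{th:lower_bound_general}. The only point worth a sentence of care is the interpretation of the $\Omega(\cdot)$ inside the exponent, namely that \enquote{there is a function $f_s$ that is $r$-mixed with $r = \Omega(n + \log(p') - \log n)$} should be read as: for all sufficiently large $n$ there exists such an $f_s$ with $r$ at least a fixed positive constant times $n + \log(p') - \log n$ (where $p$, and hence $p'$, may itself depend on $n$). With that reading, the final bound $2^{\Omega(n + \log(p') - \log n)}$ is a genuine asymptotic statement as $n \to \infty$, and no further estimation is needed.
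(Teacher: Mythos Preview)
Your proposal is correct and matches the paper's own argument: the paper derives the corollary in a single sentence by combining Theorem~\ref{th:lower_bound_general} (existence of an $r$-mixed $f_s$ with $r = \Omega(n+\log(p')-\log n)$) with Lemma~\ref{lem:kmixed} (the $2^r-1$ FBDD lower bound for $r$-mixed functions). Your additional remarks on interpreting the $\Omega(\cdot)$ in the exponent are fine but not required.
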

%\begin{proof}
%We know that for large $n \in \mathbb{N}$ there is a function $f_s$ and a level $k = \Omega(n+\log(p')-\log n)$ such that the expected number of nodes in this level is $2^k$ for every variable order. This implies that for every set of $k$ variables all possible assignments lead to different subfunctions, \ie the function is $k$-mixed. Due to Lemma \ref{lem:kmixed}, the FBDD size is at least $2^k-1 = 2^{\Omega(n+\log(p')-\log n)}$.
%\end{proof}

\section{Construction of Almost $k$-wise Independent Random Functions.}
\label{sec:construction}
The gap between the OBDD size of $3$-wise independent random functions and $4$-wise independent random functions is exponentially large. In order to see what kind of random functions have an OBDD size which is in between these bounds, we show that a construction of a random OBDD due to \cite{BolligW14} of size $O((nk)^2/\varepsilon)$ generates $(\varepsilon, k)$-wise independent functions. The idea is to construct a random OBDD with fixed width $w$. If $w$ is large enough, the function values of $k$ different inputs are almost uniformly distributed because the paths of the $k$ inputs in the OBDD are likely to be almost independent. For $0 \leq i \leq n-1$ let layer $L_i$ consists of $w$ nodes labeled by $x_i$ and layer $L_n$ be the two sinks. For all $0 \leq i \leq n-1$ we choose the $0$/$1$-successors of every node in layer $L_i$ independently and uniformly at random from the nodes in layer $L_{i+1}$. Then we pick a random node in layer $L_0$ as the root of the OBDD.

\begin{theorem}
For $w \geq k+nk(k+1)/\varepsilon$ the above random process generates $(\varepsilon, k)$-wise independent random functions.
\end{theorem}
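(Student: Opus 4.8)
The plan is to fix, once and for all, $k$ distinct inputs $x^{(1)},\ldots,x^{(k)}\in\bset^n$ and target bits $l_1,\ldots,l_k\in\bset$ and show $\bigl|\iProb{f(x^{(1)})=l_1\wedge\cdots\wedge f(x^{(k)})=l_k}-2^{-k}\bigr|\le\varepsilon$, where $f$ is the random function produced by the process and, w.l.o.g., the variable order is the identity. For each $j$ let $v_j$ be the node of layer $L_{n-1}$ reached by $x^{(j)}$ after reading $x_0,\ldots,x_{n-2}$; this node is a function only of the successor choices in layers $L_0,\ldots,L_{n-2}$. Since $f(x^{(j)})$ equals the label of the sink that the $x^{(j)}_{n-1}$-edge leaving $v_j$ points to, and these sink-edges are chosen independently of everything above $L_{n-1}$, I would first isolate the event
\[
B:\quad\text{the }k\text{ pairs }(v_1,x^{(1)}_{n-1}),\ldots,(v_k,x^{(k)}_{n-1})\text{ are pairwise distinct.}
\]
Conditioned on $B$ the $k$ relevant edges are pairwise distinct, hence have independent uniform endpoints, so $(f(x^{(1)}),\ldots,f(x^{(k)}))$ is uniform on $\bset^k$; a one-line computation then yields $\bigl|\iProb{f(x^{(1)})=l_1\wedge\cdots}-2^{-k}\bigr|\le\iProb{\overline B}$, reducing everything to showing $\iProb{\overline B}\le\varepsilon$.

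Next I would relate $\overline B$ to path collisions. Introduce the event $G$ that for every $t\in\{1,\ldots,n-1\}$ and every pair $j<j'$, the inputs $x^{(j)}$ and $x^{(j')}$ occupy the same node of $L_t$ \emph{if and only if} they agree on $x_0,\ldots,x_{t-1}$ (the ``if'' direction always holds, since then they literally follow the same path). If $G$ holds, then $v_j=v_{j'}$ forces agreement on $x_0,\ldots,x_{n-2}$, which together with $x^{(j)}_{n-1}=x^{(j')}_{n-1}$ would give $x^{(j)}=x^{(j')}$, impossible; hence $G\subseteq B$ and $\iProb{\overline B}\le\iProb{\overline G}$. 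Note that collisions at the sink layer $L_n$ never enter the argument, which is why the analysis only runs over the $n-1$ internal transitions below $L_{n-1}$.

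To bound $\iProb{\overline G}$ I would use a union bound. If $\overline G$ occurs, fix an offending pair $(j,j')$ and the smallest offending layer $t^\ast$; a short case distinction at $L_{t^\ast-1}$ (treating the boundary case $t^\ast=1$ at the root separately) shows the pair is there either at two different nodes, or at the same node but with $x^{(j)}_{t^\ast-1}\neq x^{(j')}_{t^\ast-1}$, while at $L_{t^\ast}$ the two inputs coincide; call this event $F_{j,j',t}$, so $\overline G\subseteq\bigcup_{j<j',\,1\le t\le n-1}F_{j,j',t}$. Conditioning on all successor choices in layers $L_0,\ldots,L_{t-2}$ fixes the nodes occupied at $L_{t-1}$, and in either case the two edges the pair then follows into $L_t$ are distinct edges whose endpoints are, conditionally, independent and uniform among the $w$ nodes of $L_t$ (successors of distinct nodes are chosen independently, and a single node's $0$- and $1$-successor are chosen independently). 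Hence $\iProb{F_{j,j',t}}\le 1/w$, so $\iProb{\overline G}\le\binom{k}{2}(n-1)/w$, and since $w\ge k+nk(k+1)/\varepsilon\ge nk(k+1)/\varepsilon\ge\binom{k}{2}(n-1)/\varepsilon$ this is at most $\varepsilon$, completing the proof.

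The step I expect to need the most care is the independence bookkeeping in the last paragraph: one must condition on enough to pin down the positions of \emph{all} $k$ inputs at layer $L_{t-1}$ (not merely of the pair $j,j'$) and then check that the two successor edges relevant to $(j,j')$ are still a genuine pair of independent uniform choices — which holds precisely because of the two independence features of the construction just noted, and because these two edges are always distinct. The other slightly delicate point, though routine, is the combinatorial claim that $\overline G$ is covered by the events $F_{j,j',t}$ via the minimal offending layer; everything else is a direct union bound and the arithmetic comparing $\binom{k}{2}(n-1)/w$ with $\varepsilon$.
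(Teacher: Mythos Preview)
Your argument is correct and, in fact, cleaner than the one in the paper. The paper conditions sequentially: with $D_i$ the event that the first $i$ paths end in pairwise different nodes of $L_{n-1}$, it asserts $\Pr[D_i\mid D_{i-1}]\ge(1-(i-1)/w)^n$ (justified only by the remark that $P_i$ must somewhere use a fresh edge), multiplies these bounds, and then passes through the inequality $(1-x)\ge e^{-x/(1-x)}$ to turn the product into $1-\varepsilon$; this is where the additive $k$ in $w\ge k+nk(k+1)/\varepsilon$ enters, via the denominators $w/i-1$.

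Your route avoids both the chain conditioning and the exponential estimate: a union bound over $\binom{k}{2}$ pairs and $n-1$ internal layers, with the elementary observation that two distinct edges into a width-$w$ layer collide with probability exactly $1/w$, already gives $\Pr[\overline G]\le\binom{k}{2}(n-1)/w$. This is both simpler and strictly sharper --- you actually establish the theorem under the weaker hypothesis $w\ge \binom{k}{2}(n-1)/\varepsilon$, and then merely observe that the stated bound implies it. Two further refinements in your write-up are worth noting: you work with the event $B$ that the $k$ \emph{final edges} (pairs $(v_j,x^{(j)}_{n-1})$) are distinct rather than the stronger event that the $k$ nodes $v_j$ are distinct, and your minimal-offending-layer decomposition makes the independence bookkeeping fully explicit, whereas the paper's conditional bound $\Pr[D_i\mid D_{i-1}]\ge(1-(i-1)/w)^n$ is stated without spelling out why conditioning on $D_{i-1}$ preserves the needed independence along $P_i$.
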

\begin{proof}
Let $a_1, \ldots, a_k \in \bset^n$ be $k$ different inputs and $p$ be the probability that the function values of these inputs are $\alpha_1,\ldots,\alpha_k \in \bset$. Let $P_1, \ldots, P_k$ the $k$ paths of $a_1,\ldots,a_k$ to the layer $L_{n-1}$, \ie the paths end in a node labeled by $x_{n-1}$. Let $D_i$ be the event that the paths $P_1,\ldots,P_i$ end in different nodes. Since the inputs are different, every $P_i$ has to use an edge which is not used by any other path and, therefore, it holds $\Prob{}{D_i \mid D_{i-1}} \geq (1-\frac{i-1}{w})^n$ and with it $\Prob{}{D_k} = \prod\limits_{i=2}^{k} \Prob{}{D_i \mid D_{i-1}} \geq \prod\limits_{i=2}^{k} (1-\frac{i-1}{w})^n$. We have 
$ \prod\limits_{i=2}^{k} (1-\frac{i-1}{w})^n \geq \prod\limits_{i=2}^{k} e^{-\frac{n}{w/i-1}} \geq 1-\varepsilon $ for $w \geq k+nk(k+1)/\varepsilon \geq k+nk(k+1)(1/\ln(\frac{1}{1-\varepsilon}))$. If all paths end in different nodes, then the function values of the $k$ inputs are independent and uniformly distributed, \ie $p \geq 2^{-k} \cdot \Prob{}{D_k} \geq 2^{-k}-\varepsilon$ and $p \leq 1-(1-2^{-k})\cdot \Prob{}{D_k} \leq 2^{-k}+\varepsilon$ which completes the proof.
\end{proof}

%If we are allowed to do repeated test of variables, we have to investigate the communication complexity of the function (TODO). 

%\paragraph{TODO.} More randomized algorithms :) Both with a polylogarithmic number of operations (in expectation) and heuristics. Bonus: Number of operations not polylogarithmic in deterministic setting and here with polylogarithmic operations in expectation! Idea for new random priority function: Let $r \in \bset^n$ be a random vector, then define
%$$ d_r(x,y) = 1 \Leftrightarrow \vert x - r \vert \leq \vert y - r \vert. $$
%General idea: Let $b(p)$ the binary number of $\lceil p \cdot 2^n \rceil$. Simulate the random number for each $x \in \bset^n$ bit by bit and check whether the number is greater than $b(p)$. More general, $b(x): \bset^n \rightarrow \mathbb{N}$ and let $b_i(x) $ a function that maps $x$ to the $i$-th bit of a $b(x)$ (note that $b_i$ needs to be efficiently computable). Then to the simulation as above. 

\section{Randomized Implicit Algorithms}
\label{sec:alg}

\subsubsection*{Complexity Class} Only a small modification is necessary to extend Sawitzki's simulation results from \cite{SawitzkiSOFSEM06} and \cite{Sawitzki07} to show that the set of problems which can solved by a randomized implicit algorithm is equal to the set of problems solved by a randomized parallel algorithm. In the implicit setting, we just add the possibility to construct random functions $r: \bset^l \rightarrow \bset$ with $l = O(\log N)$. Constructing such functions in parallel is easy. The other way round, \ie simulating a randomized parallel algorithm by a randomized implicit algorithm, the only difference is the set of input variables of the circuit (which represents the (randomized) parallel algorithm). A deterministic circuit has only $N$ input variables whereas the random circuit has additional $O(N^c)$ random inputs for a constant $c$. Assuming we can construct a random function $r: \bset^l \rightarrow \bset$ with $l = O(\log N)$, we can set the input variables correctly for the simulation (in the same way as in \cite{Sawitzki07}).

\begin{algorithm}[h]
\caption{Randomized implicit maximal matching algorithm}
\label{alg:implrandmatch}
\algorithmicrequire{Graph $\chi_E(x,y)$}\\
\algorithmicensure{Maximal matching $\chi_M(x,y)$}
\begin{algorithmic}
\STATE $\chi_M(x,y) = 0$ \hfill \COMMENT{Initial matching}
\WHILE{$\chi_E(x,y) \not\equiv 0$}
\STATE $\chi_{E'}(x,y) = \chi_E(x,y)$
\STATE \COMMENT{Compute set of nodes with two or more incident edges}
\STATE $T(x) = \exists z,y: (z \neq y) \wedge \chi_{E'}(x,y) \wedge \chi_{E'}(x,z)$
\STATE $NewEdges(x,y) = 0$
\WHILE{$T(x) \not\equiv 0$}
\STATE \COMMENT{Construct $3$-wise independent random functions (see Algorithm \ref{alg:implrandfunc})}
\STATE $f_{r_1}(x) = RandomFunc(x,n)$ and $f_{r_2}(y) = RandomFunc(y,n)$ 
\STATE $F(x,y) = (x > y) \wedge (f_{r_1}(x) \oplus f_{r_2}(y))$
\STATE $F(x,y) = F(x,y) \vee F(y,x)$
\STATE $\chi_{E'}(x,y) = \chi_{E'}(x,y) \wedge F(x,y)$ \hfill \COMMENT{Delete edges with probability $1/2$}
\STATE $T(x) = \exists z,y: (z \neq y) \wedge \chi_{E'}(x,y) \wedge \chi_{E'}(x,z)$ \hfill \COMMENT{Update $T(x,y)$}
\STATE \COMMENT{Store isolated edges in NewEdges}
\STATE $NewEdges(x,y) = NewEdges(x,y) \vee (\chi_{E'}(x,y) \wedge \overline{T(x)} \wedge \overline{T(y)})$
\ENDWHILE
\STATE $\chi_M(x,y) = \chi_M(x,y) \vee NewEdges(x,y)$ \hfill \COMMENT{Add edges to current matching}
\STATE $Matched(x) = \exists y: \chi_M(x,y)$ 
\STATE $\chi_E(x,y) = \chi_E(x,y) \wedge \overline{Matched(x)} \wedge \overline{Matched(y)}$ \hfill \COMMENT{Delete edges incident to matched nodes}
\ENDWHILE
\RETURN $\chi_M(x,y)$
\end{algorithmic}
\end{algorithm}

\subsubsection*{Randomized Maximal Matching Algorithm}
We use the construction of $3$-wise independent random functions from the last section to design a randomized maximal matching algorithm. Here, the main drawback of our random construction is the missing possibility to use different probabilities for the nodes. Randomized algorithms for maximal independent set using pairwise independence like in \cite{Alon86} or \cite{Luby86} choose a node with a probability proportional to the node degree. In order to simulate these selections by our construction, we delete each edge with probability $1/2$ as long as there are other incident edges. Finally, we add the remaining isolated edges to the matching.
Algorithm \ref{alg:implrandmatch} shows the whole randomized implicit maximal matching algorithm. We realize the edge deletions of the inner loop in the following way: We construct two $3$-wise independent random functions $f_{r_1}(x), f_{r_2}(y)$ using Algorithm \ref{alg:implrandfunc} and set $ F(x,y) = (x > y) \wedge (f_{r_1}(x) \oplus f_{r_2}(y))$.
Since  $\Prob{r_1,r_2}{f_{r_1}(x) \oplus f_{r_2}(y) = 1} = \Prob{r_1,r_2}{f_{r_1}(x) \neq f_{r_2}(y)} = 1/4+1/4 = 1/2$ for inputs $x \leq y$ the function $F(x,y)$ deletes such edges as required. Because we are dealing with undirected graphs, we want $F(x,y) = F(y,x)$ for every $(x,y)$. Therefore, we set $F(x,y) = F(x,y) \vee F(y,x)$ and delete the edges with the operation $\chi_E(x,y) = \chi_E(x,y) \wedge F(x,y)$.

We say that an edge $e \in E'$ (before the inner while-loop) survives iff $e \in E'$ after the inner while-loop of algorithm \ref{alg:implrandmatch}.
\begin{lemma}
\label{lem:edgesurvives}
For every $e = \lbrace u,v \rbrace \in E$ with $deg_{E}(u) > 1$ or $\deg_{E}(v) > 1$ before the inner while-loop in algorithm \ref{alg:implrandmatch} the probability that $e$ survives is at least $\frac{1}{8 \cdot (deg_{E}(u) + \deg_{E}(v) - 2)}$.
\end{lemma}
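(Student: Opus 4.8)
The plan is to analyze a single edge $e = \{u,v\}$ across the iterations of the inner while-loop and show that, conditioned on $e$ still being present, it has a reasonable chance of surviving \emph{all} further rounds. First I would set up notation: let $E_0' \supseteq E_1' \supseteq \cdots$ be the edge sets at the start of successive inner-loop iterations, and let $d_t = \deg_{E_t'}(u) + \deg_{E_t'}(v) - 2$ be the number of edges incident to $e$ (other than $e$ itself) that are still present at round $t$; note $d_0 = \deg_E(u)+\deg_E(v)-2$. The loop continues only while $T \not\equiv 0$, i.e., while some node still has degree $\geq 2$; the key point is that $e$ "fails" to survive only if it is deleted by some round's $F$-test, which requires one of its endpoints to still have another incident edge, i.e., requires $d_t \geq 1$ at the round $e$ is deleted.

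The core estimate I would establish is: in any round $t$ with $d_t \geq 1$, conditioned on the current edge set, the probability that $e$ is deleted in that round is at most $1/2$, and moreover — this is where pairwise independence enters — the probability that $e$ is deleted while at least one of its incident neighbors is \emph{also} deleted (so that $d_{t+1} < d_t$) can be lower-bounded, so that $d_t$ "decays" fast enough. More concretely, I would argue as in the Israeli--Itai / Luby style analysis: condition on $e$ surviving round $t$; by symmetry $F$ orients each incident edge, and one shows that with probability bounded below by a constant, either $e$ becomes isolated (hence added to $NewEdges$ and safe forever) or the number $d_t$ of competing incident edges drops by a constant fraction. Summing a geometric-type bound over the at most $\log(d_0)+O(1)$ effective rounds, and using that at each round the "bad" event (deletion of $e$ while a neighbor persists) has probability at most $1/2$, yields that $e$ survives with probability $\Omega(1/d_0)$; tracking constants carefully gives the stated $\frac{1}{8(d_0)} = \frac{1}{8(\deg_E(u)+\deg_E(v)-2)}$ (with the $1/8$ absorbing the constant loss per "level" together with the first-round factor $1/2$ coming from $\Prob{}{f_{r_1}(x)\oplus f_{r_2}(y)=1}=1/2$).

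For the probability computations I would use Theorem~\ref{th:obdd_construction}(1): the functions $f_{r_1}, f_{r_2}$ are $3$-wise independent, hence in particular pairwise independent, with $\Prob{}{f_{r_i}(x)=1}=1/2$; so for distinct inputs $x,y$ we get $\Prob{}{f_{r_1}(x)\oplus f_{r_2}(y)=1} = 1/2$ and, crucially, for a path $x < y < z$ (or any three/four distinct endpoints arising from $e$ and one or two incident edges) the relevant joint deletion events are independent enough to apply an inclusion–exclusion / second-moment argument. The use of $x > y$ in the definition of $F$ makes the orientation antisymmetric, so one should be slightly careful to track which endpoint "loses" — I would handle the two endpoints $u,v$ separately and then combine.

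The main obstacle I expect is getting the recursion on $d_t$ to close with the explicit constant $1/8$ rather than merely $\Omega(1/d_0)$: one has to be careful that conditioning on "$e$ has survived rounds $1,\dots,t-1$" does not skew the distribution of the fresh random bits $r_1, r_2$ in round $t$ (it does not, since new independent bits are drawn each round), and that the degree-halving argument still goes through when the \emph{neighbors'} incident edges are arbitrary. A clean way around the fully general decay analysis is the standard trick of lower-bounding the survival probability by the probability of a single favorable "all-or-nothing" event — e.g., the event that in the very first round $e$ is retained while every edge incident to $u$ or $v$ other than $e$ is deleted, which already gives probability $\geq 2^{-O(d_0)}$, too weak — so instead one genuinely needs the per-level argument, and making the bookkeeping of conditional probabilities rigorous there is the delicate part.
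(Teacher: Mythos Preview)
Your proposal has the right setup (survival $=$ ``$e$ is still present when all other edges at $u,v$ are gone'') and correctly notes that fresh random bits each round make rounds independent, but the analysis you sketch is both harder than needed and does not actually close. The paper's proof is a one-shot argument, not a per-round decay argument.

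\medskip
\textbf{What the paper does.} Define the lifetime $R_{e'}$ of each edge $e'$ as the round in which it is deleted. Each $R_{e'}$ is geometric with parameter $1/2$, and because rounds use fresh bits and the within-round deletion bits are $3$-wise independent, the family $(R_{e'})$ is (at least) pairwise independent. The event ``$e$ survives'' is exactly ``$R_e$ is the unique maximum in $\{R_{e'}:e'\in N(e)\}$''. One then lower-bounds this by the probability of a \emph{single} concrete event: pick $z=\lceil\log(|N(e)|-1)\rceil+2$, observe $\Pr[R_e=z]=2^{-z}$, and use pairwise independence plus a union bound over the $|N(e)|-1$ neighbours to get $\Pr[\max_{e'\neq e}R_{e'}<z\mid R_e=z]\ge 1/2$. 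Multiplying gives $2^{-z}\cdot\tfrac12\ge \tfrac{1}{8(|N(e)|-1)}=\tfrac{1}{8(\deg(u)+\deg(v)-2)}$. No recursion on $d_t$, no conditioning gymnastics.

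\medskip
\textbf{Where your plan runs into trouble.} Your core step is ``conditioned on $e$ surviving round $t$, with constant probability $d_{t+1}\le c\,d_t$''. After conditioning on $e$'s bit, the neighbour-deletion indicators in that round are only pairwise independent, so you cannot appeal to Chernoff; Chebyshev gives a bound like $O(1/d_t)$ on the deviation probability, which is useless precisely when $d_t$ is small. You then have to multiply $(1/2)$ (for $e$'s survival) over a \emph{random} number of rounds and argue this number is $\log d_0+O(1)$ with decent probability --- which is exactly the concentration you don't have. Your ``all-or-nothing'' remark was actually very close to the right fix: the correct single favorable event is not ``in round $1$ everything incident to $e$ dies'' (that is $2^{-\Theta(d_0)}$), but ``$e$ lives $\approx\log d_0$ rounds and by then every neighbour has died'', which is precisely the paper's event and already gives $\Theta(1/d_0)$ via Markov/union bound. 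Finally, a small correction: $F$ does not ``orient'' edges in the Israeli--Itai sense; it just deletes each edge independently with probability $1/2$.
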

\begin{proof}
%We modify the algorithm a little bit to make the analysis easier: In each iteration of the inner loop we delete each edge in $E'$ with probability $1/2$ (regardless of the degree of the incident nodes) and add all edges $(u,v)$ with $deg(u) = deg(v) = 1$ to a set $M'$. We do this until the graph, \ie $E'$, is empty and add $M'$ to $M$. In other words, instead of keeping the isolated edges in the graph we collect them in the set $M'$ which is added to $M$ after the loop. Thus, this modification does not change the set of edges we add to the matching.

Let $e = \lbrace u, v \rbrace \in E$ be an edge before the inner while-loop and $R_e$ be the number of rounds until edge $e$ is deleted. The random bits in each iteration are $3$-wise independent and the iterations themselves are completely independent. Thus, the variables $R_e$ are also $3$-wise independent. Denote by $N(e) = \lbrace e' \in E \mid e \cap e' \neq \emptyset \rbrace$ the neighborhood of $e$, \ie all edges incident to $u$ or $v$. Then we have
$ \Prob{}{e \text{ survives}} = \Prob{}{R_e \text{ is unique maximum in } \lbrace R_{e'} \mid e' \in N(e) \rbrace}$.
It is easy to see that $\Prob{}{R_e = i} = \left( \frac{1}{2} \right)^i$ for $i \geq 1$. Let $e' \in N(e)$ and $e' \neq e$ and $z \geq 1$ be fixed. Since the $R_e$ are $3$-wise independent, we have
$ \Prob{}{R_{e'} \geq z \mid R_e = z} = \Prob{}{R_{e'} \geq z} = \sum\limits_{i=z}^\infty \left( \frac{1}{2} \right)^i = \left( \frac{1}{2} \right)^{z-1}.$
Therefore, the probability that there is an edge $e' \in N(e) \setminus e$ with $R_{e'} \geq z$ is at most $\frac{\vert N(e) \vert - 1}{2^{z-1}}$, \ie $R_e$ is unique maximum with probability at least $1-\frac{\vert N(e) \vert - 1}{2^{z-1}}$. This is greater than $0$ for $z \geq \log(\vert N(e) \vert-1)+2$. Finally, we have
$\Prob{}{R_e \text{ is unique maximum}} \geq \left(\frac{1}{2}\right)^{\log(\vert N(e) \vert-1)+2} \cdot \left(1-\frac{\vert N(e) \vert - 1}{2^{\log(\vert N(e) \vert-1)+1}}\right) \geq \frac{1}{8 \cdot (deg_{E}(u) + \deg_{E}(v) - 2)}$
%$$ \begin{array}{rcl}
%\Prob{}{R_e \text{ is unique maximum}} & \geq & \left(\dfrac{1}{2}\right)^{\log(\vert N(e) \vert-1)+2} \cdot \left(1-\dfrac{\vert N(e) \vert - 1}{2^{\log(\vert N(e) \vert-1)+1}}\right)\\
%& = & \dfrac{1}{4 \cdot (\vert N(e) \vert-1)} \cdot \dfrac{1}{2}\\
%& = & \dfrac{1}{8 \cdot (deg_{E}(u) + \deg_{E}(v) - 2)}.
%\end{array}$$
%\Prob{}{R_e \text{ is unique maximum}} & \geq & \sum\limits_{z=\log(\vert N(e) \vert-1)+2}^\infty \left(\dfrac{1}{2}\right)^z \cdot \left(1-\dfrac{\vert N(e) \vert - 1}{2^{z-1}}\right)\\
%& \geq & \left(\dfrac{1}{2}\right)^{\log(\vert N(e) \vert-1)+2} \cdot \left(1-\dfrac{\vert N(e) \vert - 1}{2^{\log(\vert N(e) \vert-1)+1}}\right)\\
\end{proof}

%Algorithm \ref{alg:implrandmatch} shows the implicit version of Algorithm \ref{alg:randmaximalmatching}. 
The number of deleted edges for a matching edge $(u,v)$ that is added to the matching is $deg(u)+deg(v)-2$ if we do not count the matching edge itself. Thus, the expected number of deleted edges is $\Omega(\vert E \vert)$ at the end of the outer loop. This gives us the final result.
\begin{theorem}
\label{th:algorithm_operations}
Let $G = (V,E)$ be a graph with $N$ nodes. All functions used in algorithm \ref{alg:implrandmatch} on the input $\chi_E$ depend on at most $3 \log N$ variables. The expected number of operations is $O(\log^3 N)$.
\end{theorem}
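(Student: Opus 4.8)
The plan is to bound two quantities separately: (i) the number of variables of every OBDD that appears during the execution, and (ii) the expected total number of functional operations. The first claim is essentially bookkeeping. Every function in Algorithm~\ref{alg:implrandmatch} is built from $\chi_E(x,y)$, from the auxiliary functions $T(x)$, $Matched(x)$, $F(x,y)$, $NewEdges(x,y)$, $\chi_M(x,y)$, $\chi_{E'}(x,y)$, and from the random functions $f_{r_1}(x),f_{r_2}(y)$ produced by \texttt{RandomFunc}. All of these are defined on the vectors $x,y$ (each of length $n=\lceil\log N\rceil$) plus the bound variable $z$ used only inside an existential quantification ($\exists z,y\colon\ldots$); since quantification reintroduces a function on $x,y$ only, no persistent function ever uses more than $3n\le 3\log N+3$ variables, and after the quantification we are back to $2n$ variables. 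So part~(i) follows by simply listing the functions and observing the quantifier structure.

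\textbf{For the operation count,} the key is to control the number of iterations of the two loops. I would first analyze the \emph{inner} while-loop (the one driven by $T(x)\not\equiv 0$). In each pass through the inner loop every edge that still has a neighbouring edge is deleted independently with probability $1/2$ (this is exactly the computation of $F$ and the discussion preceding Lemma~\ref{lem:edgesurvives}), so after $O(\log N)$ rounds every node has degree at most $1$ with high probability, i.e.\ $T(x)\equiv 0$; more carefully, the probability that a fixed vertex of current degree $d$ still has degree $\ge 2$ after $t$ rounds is at most $\binom{d}{2}2^{-t}\le N^2 2^{-t}$, so a union bound over the $\le N$ vertices gives that $O(\log N)$ inner rounds suffice with probability $\ge 1-N^{-c}$, and the expected number of inner rounds is $O(\log N)$ (the tail contributes $O(1)$). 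Each inner round performs a constant number of synthesis/quantification/\texttt{RandomFunc} operations, hence $O(\log N)$ operations per inner round is a generous bound — actually $O(1)$ operations plus the $O(n)$ cost hidden in argument reordering $F(x,y)\vee F(y,x)$ — so each execution of the inner loop costs $O(\log^2 N)$ operations in expectation.

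\textbf{Next I would bound the number of outer iterations.} Here I use the sentence already in the excerpt: when an isolated edge $\{u,v\}$ is added to the matching, all $deg(u)+deg(v)-2$ other edges incident to $u$ or $v$ are removed in the line $\chi_E(x,y)=\chi_E(x,y)\wedge\overline{Matched(x)}\wedge\overline{Matched(y)}$. By Lemma~\ref{lem:edgesurvives}, an edge $e=\{u,v\}$ survives the inner loop — and is therefore isolated and added to $M$ — with probability at least $1/(8(deg_E(u)+deg_E(v)-2))$; summing the quantity $deg_E(u)+deg_E(v)-2$ (the number of edges it ``pays for'') against this survival probability, the expected number of edges deleted in one outer iteration is $\Omega(|E|)$, so $|E|$ drops by a constant factor in expectation each outer round. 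Hence the expected number of outer iterations is $O(\log|E|)=O(\log N)$, and combining with the $O(\log^2 N)$ expected cost of one outer iteration (dominated by its inner loop) gives $O(\log^3 N)$ expected operations in total. A small care is needed because the ``constant-factor shrinkage in expectation'' must be turned into ``$O(\log N)$ rounds in expectation'': this follows from a standard argument, e.g.\ conditioning on the current edge set, noting $E[|E_{i+1}|\mid E_i]\le c|E_i|$ for a constant $c<1$, so $E[|E_i|]\le c^i|E|$ and once this drops below $1$ we are done, with the tail again contributing $O(1)$.

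\textbf{The main obstacle} is the interplay of the two expectations: the per-outer-round work is itself a random variable (it depends on the random number of inner rounds), and the number of outer rounds is random and not independent of the graph structure encountered. I would handle this by bounding the inner-loop cost by $O(\log^2 N)$ in expectation \emph{uniformly} over the current edge set (the inner-loop analysis above does not use anything about how $E$ was reached), so that by linearity/Wald-type reasoning the total expected cost is (expected number of outer rounds) $\times$ (uniform expected cost per round) $=O(\log N)\cdot O(\log^2 N)=O(\log^3 N)$. The only genuinely delicate point is making the survival-probability bound of Lemma~\ref{lem:edgesurvives} yield $\Omega(|E|)$ expected deletions rather than merely $\Omega(|E|/\mathrm{polylog})$: this works because each surviving isolated edge removes exactly $deg_E(u)+deg_E(v)-2$ edges, which cancels the denominator in the survival probability, so $E[\#\text{deleted}]\ge\sum_{e=\{u,v\}}\frac{deg_E(u)+deg_E(v)-2}{8(deg_E(u)+deg_E(v)-2)}=|E^{\ge2}|/8$ where $E^{\ge2}$ is the set of non-isolated edges; isolated edges present at the start of the outer round are added directly, so in either case a constant fraction of $E$ disappears.
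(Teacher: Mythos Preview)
Your proposal is correct and follows essentially the same route as the paper: $O(\log N)$ operations per inner iteration, $O(\log N)$ expected inner iterations, and $O(\log N)$ expected outer iterations via Lemma~\ref{lem:edgesurvives} with the telescoping sum $\sum_e \frac{deg(u)+deg(v)-2}{8(deg(u)+deg(v)-2)}=|E|/8$. You are in fact more careful than the paper on several points the paper glosses over (turning ``constant-factor shrinkage in expectation'' into an $O(\log N)$ bound on the expected number of rounds, the Wald-type decoupling of the two random loop lengths, and the treatment of already-isolated edges); your inner-loop termination argument via a union bound on vertices of degree $\ge 2$ is a minor variant of the paper's ``edges halve in expectation'' remark, but both yield the same $O(\log N)$ bound.
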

\begin{proof}
Each iteration of the inner-loop needs $O(\log N)$ operations. 
Since we halve the number of edges in expectation in each iteration of this loop, the expected number of iterations is $O(\log N)$. The edges surviving the inner loop are those that are added to the matching. After adding a set of edges to the matching, all edges that are incident to a matched node are deleted from the graph in the outer loop. The number deleted edges for a matching edge $(u,v)$ that is added to the matching is $deg(u)+deg(v)-2$ if we do not count the matching edge itself. Thus, by Lemma 3, the expected number of edges deleted in this step is at least
$$ \sum_{e = \lbrace u, v \rbrace \in E} (deg_E(u) + deg_E(v)-2)\cdot\dfrac{1}{8 \cdot (deg_{E}(u) + \deg_{E}(v) - 2)} = \vert E \vert / 8. $$
This implies that the expected number of iterations of the outer-loop is also bounded above by $O(\log N)$.
\end{proof}

\subsubsection*{Application to the Maximal Independent Set Problem}
With a similar idea we are able to design a distributed MIS algorithm: Each node $v$ draws a random bit until this bit is $0$. Let $r_v$ be the number of bits drawn by node $v$. We send $r_v$ to all neighbors and include node $v$ to the independent set iff $r_v$ is a local minimum. The expected number of bits for each channel is $1$. A similar analysis as before show that we have an maximal independent set after $O(\log N)$ steps in expectation and the overall expected number of bits per channel is $O(\log N)$.

\subsection*{Experimental Results.} All algorithms are implemented in
C++ using the BDD framework CUDD 2.5.0\footnote{\url{http://vlsi.colorado.edu/~fabio/CUDD/}} by F. Somenzi and were compiled with Visual Studio 2013 in the default $32$-bit release configuration. All source files, scripts and random seeds will be publicly available\footnote{\url{http://ls2-www.cs.uni-dortmund.de/~gille/}}. The experiments were performed on a computer with a 2.5 GHz Intel Core i7 processor and 8 GB main memory running Windows 8.1. The runtime is measured by used processor time in seconds and the space usage of the implicit algorithm is given by the maximum SBDD size which came up during the computation, where an SBDD is a collection of OBDDs which can share nodes. Note that the maximum
SBDD size is independent of the used computer system. For our results, we took
the mean value over 50 runs on the same graph. Due to the small variance of these values, we only show the mean in the diagrams/tables.  
We omit the algorithm by Bollig and Pr\"{o}ger \cite{BolligP12} because the memory limitation was exceeded on every instance presented here. 

We choose three types of input instances: First, we used our construction from section \ref{sec:construction} as an input distribution in the following way: If the $1$ sink is chosen with probability $p$ as a successor of nodes in layer $L_{n-1}$ the expected size of $\vert f^{-1}(x) \vert$ is $p \cdot 2^n$. For a fixed $N = 2^{17}$, we used $p$ as a density parameter for our input graph and want to analyze how the density influences the running time of the algorithms. Second, we run the algorithms on some bipartite graphs from a real advertisement application within Google\footnote{Graph data files can be found at \url{http://www.columbia.edu/~cs2035/bpdata/}} \cite{NegPSSS09}. The motivation was to check whether the randomized algorithm is competitive or even better on instances where the algorithm by Hachtel and Somenzi (HS) \cite{HachtelS97} is running very well. Third, we use non-bipartite graphs from the university of Florida sparse matrix collection \cite{UFS}. Since HS is designed for bipartite graphs, a preprocessing step computing a bipartition of these graphs are needed to compute a maximal matching (see, \eg \cite{BolligP12}) while our algorithm also works on general graphs.

In the experiments we use the following implementation of our algorithm denoted by RM. In order to minimize the running time for computation of the set of nodes with two or more incident edges, we sparsify the graph at the beginning of the outer while loop by deleting each edge with probability $1/2$ and repeating this $D$ times. Initially, we set $D = \log \vert E \vert$ and decrease $D$ by $1$ at the end of the outer loop. Asymptotically, the running time does not change since after $O(\log N)$ iterations, \ie $D = 0$, it does exactly the same as original algorithm. Initial experiments showed that this is superior to the original algorithm.

\begin{figure}[H]
\includegraphics[width=0.49\textwidth]{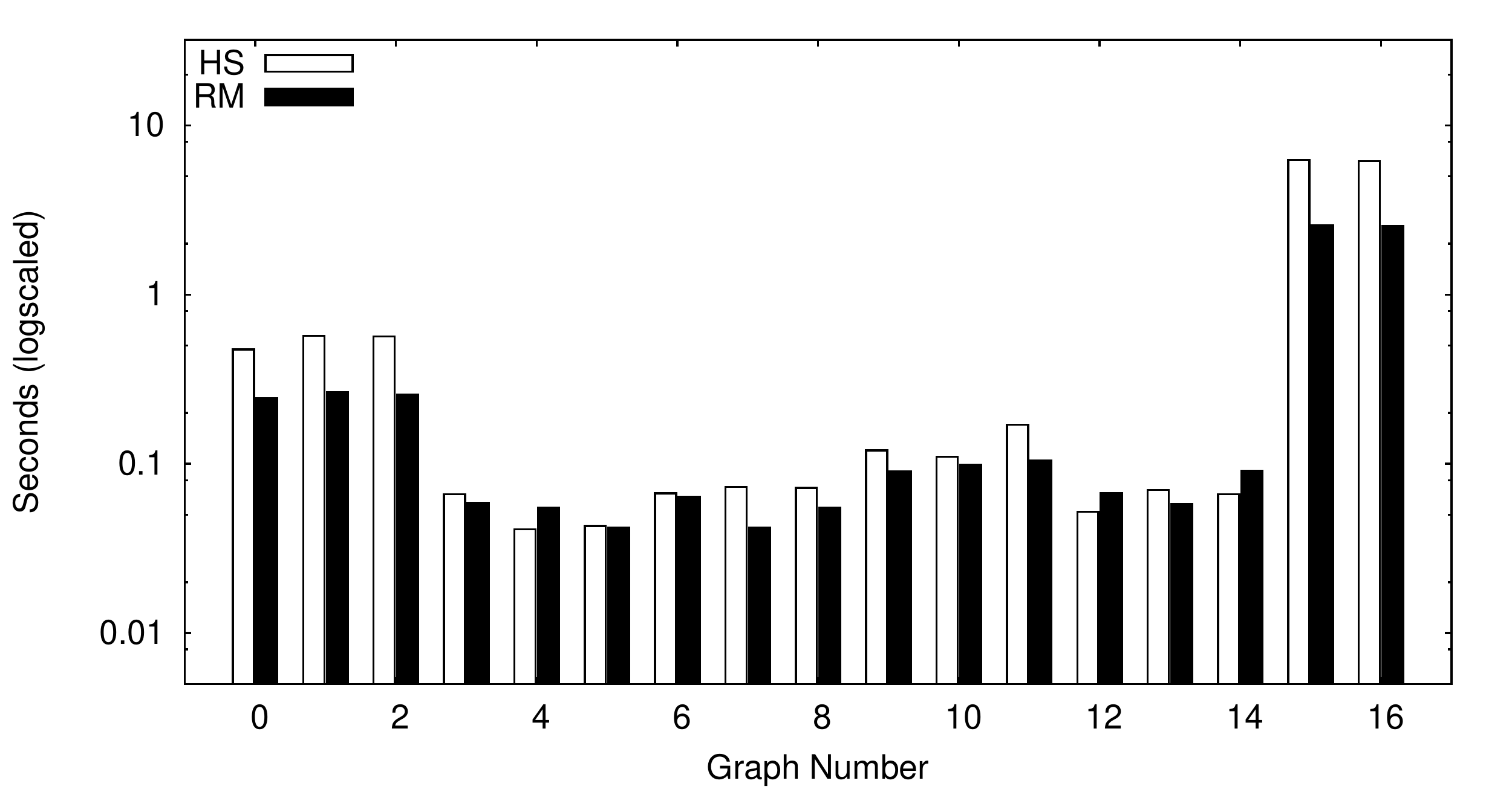}
\includegraphics[width=0.49\textwidth]{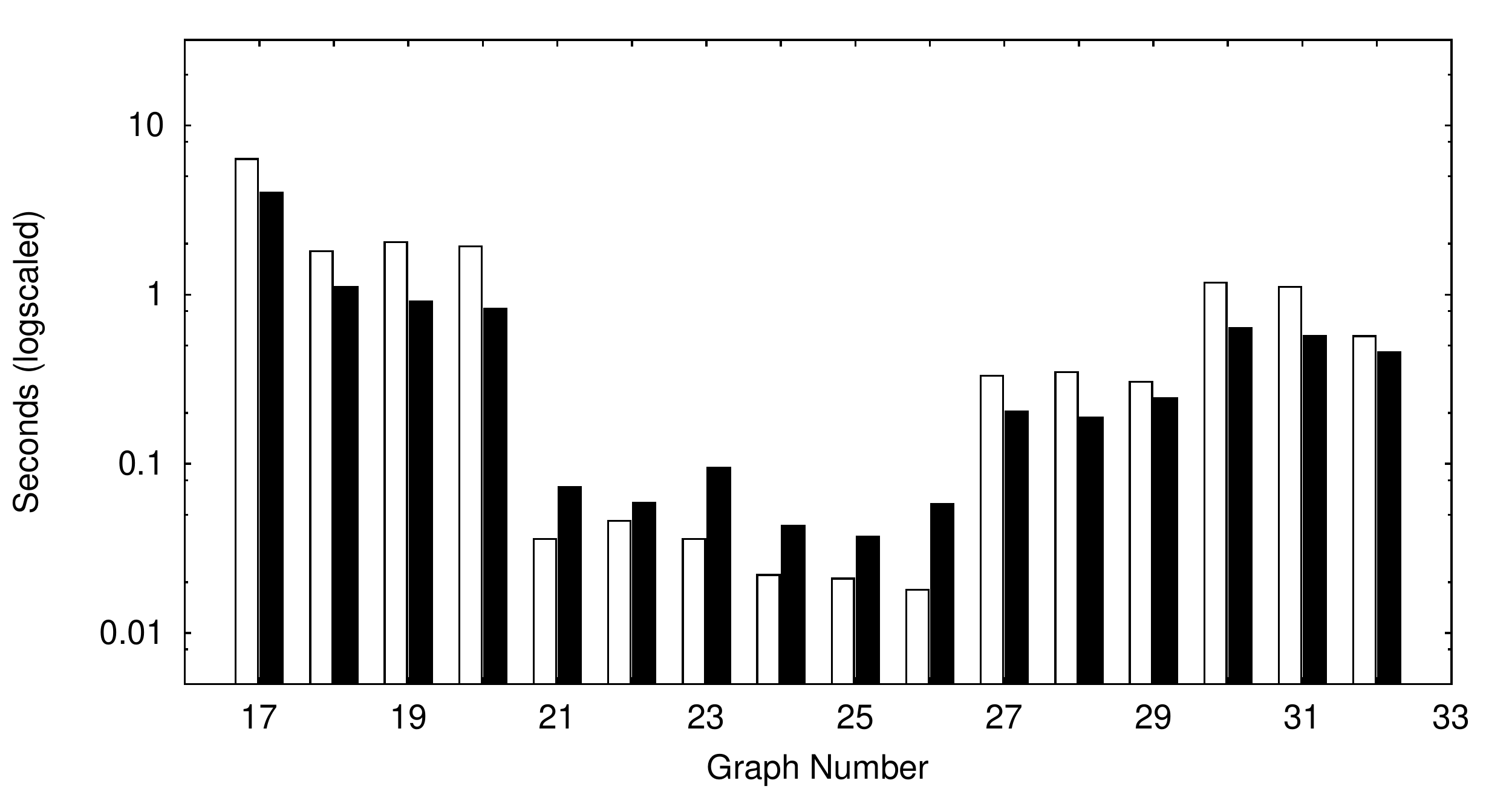}
\caption{Running times of HS and RM on the real world instances.}\label{fig:real}
\end{figure}

\begin{table}
\centering
\small 
\begin{tabular}{|c|c|c|c|c|}
\hline \textbf{Instance} & \textbf{Nodes}  & \textbf{Edges}  & \textbf{Time (sec)} & \textbf{Space (SBDD size)}  \\ 
\hline 333SP & 3712815  & 22217266  & 1140.54 & 66968594  \\ 
\hline adaptive & 6815744  & 27248640  & 403.82 & 22767094\\ 
\hline as-Skitter & 1696415  & 22190596  & 337.53 & 32020282 \\ 
\hline hollywood-2009 & 1139905 & 113891327 & 418.36 & 62253086 \\
\hline roadNet-CA & 1971281 & 5533214  & 136.18 & 13177668 \\ 
\hline roadNet-PA & 1090920 & 3083796  & 75.26 & 7633318 \\ 
\hline roadNet-TX & 1393383 & 3843320 & 92.62 & 9125438 \\ 
\hline 
\end{tabular} 
\caption{Running time and space usage of RM on the graphs from \cite{UFS}}\label{tab:ufs}
\end{table}

On the random instances the running time and space usage of RM was more or less unaffected by the density of the graph while HS was very slow for small values of $p$ and gets faster with increasing density. For $p \leq 0.2$ RM was much faster than HS (see Fig.\,\ref{fig:random_runtime} \ref{fig:random_space}). In Fig.\,\ref{fig:real} we see that on the bipartite real world instances RM is similar to HS if the running time is negligibly small but on the largest instances (number 15 to 20) RM is much faster. the graphs from \cite{UFS} were intentionally chosen to show the potential of RM and indeed do so: It was not possible to run HS on these graphs due to memory limitations whereas RM computed a matching in reasonable time and space (see Table\,\ref{tab:ufs}). Both graphs from and \cite{UFS} have very small density and the experiments on the random graphs seem to support the hypothesis that RM is a better choice than HS for such graphs.

\begin{figure}[H]
\includegraphics[width=\textwidth]{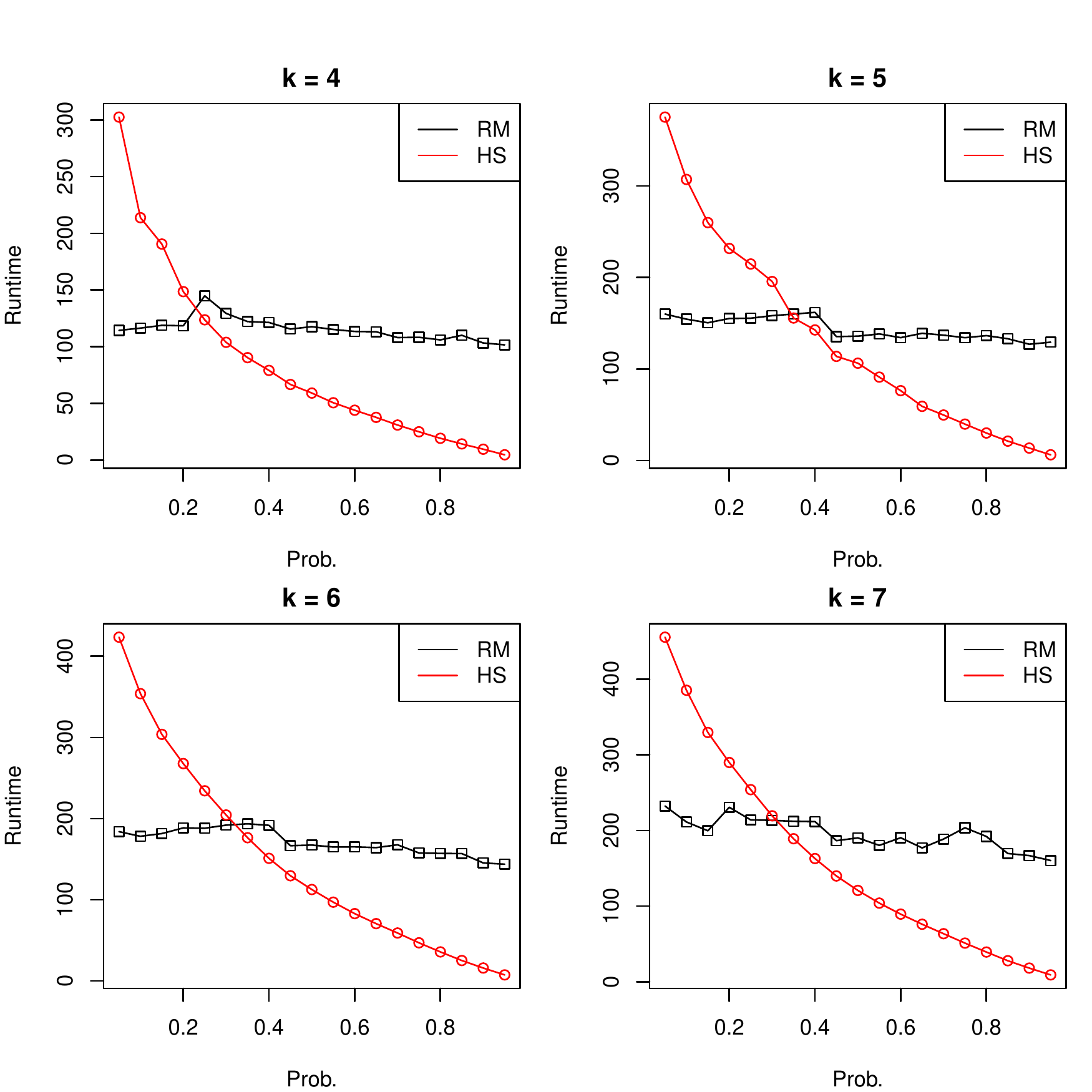}
\caption{Running times of HS and RM on the random instances.}\label{fig:random_runtime}
\end{figure}

\begin{figure}[H]
\includegraphics[width=\textwidth]{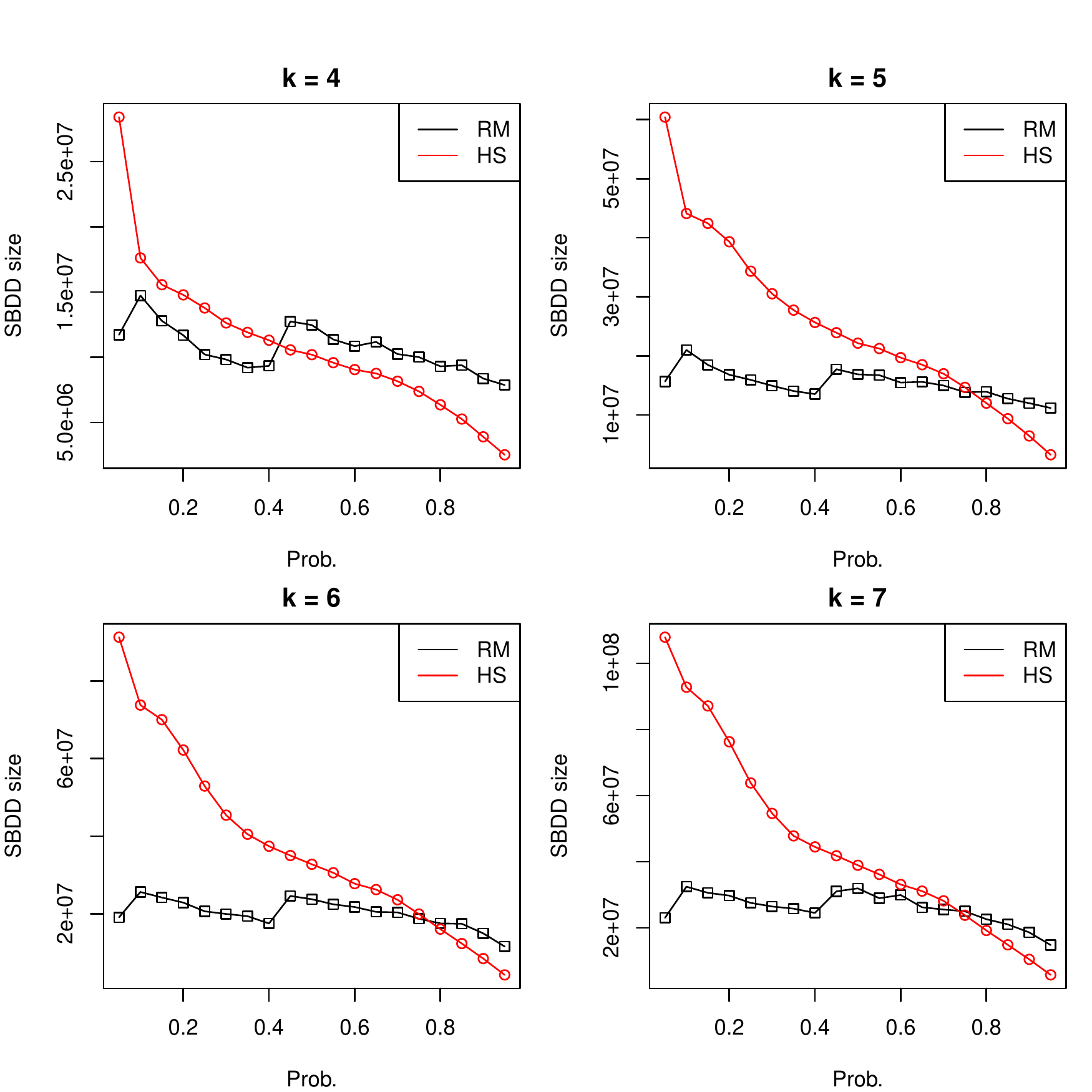}
\caption{Space usage of HS and RM on the random instances.}\label{fig:random_space}
\end{figure}

\subsubsection*{Further Applications.} 
Extending our matching algorithm to the more general $f$-matching, where each node $v$ is allowed to have at most $f(v)$ incident matching edges, is an interesting question. Designing other randomized implicit algorithms, \eg for minimum spanning tree, where random sampling of subgraphs are necessary, seems straightforward and initial experiments showed that this could lead to faster algorithms than the known deterministic ones.

%\section{Conclusion}

%\input{conclusion}

\subsubsection*{Acknowledgements} I would like to thank Beate Bollig, Melanie Schmidt and Chris Schwiegelshohn for the valuable discussions and for their comments on the presentation of the paper.

\bibliographystyle{acm} 
\bibliography{literatur}

\newpage
\appendix
\subsection*{Proof of Theorem \ref{th:lower_bound_general}}
\begin{claim}
\label{cl:entropy}
Let $\varepsilon \leq 1/2$. Then $3\varepsilon+H(\varepsilon) \leq 6 \sqrt{\varepsilon}$.
\end{claim}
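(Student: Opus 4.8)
The plan is to split $3\varepsilon+H(\varepsilon)$ into the three nonnegative summands $3\varepsilon$, $\;\varepsilon\log(1/\varepsilon)$ and $(1-\varepsilon)\log\tfrac{1}{1-\varepsilon}$, and to bound each of them by a constant multiple of $\sqrt{\varepsilon}$ with the three constants adding up to at most $6$. Throughout I would use only that $0\le\varepsilon\le 1/2$, the elementary inequalities $\log(1+x)\le x$ and $\tfrac{1}{1-x}\le 1+2x$ for $x\le 1/2$ (exactly the two facts quoted just before Claim~\ref{cl:entropy}), together with the trivial estimate $\varepsilon=\sqrt{\varepsilon}\cdot\sqrt{\varepsilon}\le\tfrac{1}{\sqrt2}\sqrt{\varepsilon}$.

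Two of the three terms are immediate. For the first, $3\varepsilon\le\tfrac{3}{\sqrt2}\sqrt{\varepsilon}$. For the term $(1-\varepsilon)\log\tfrac{1}{1-\varepsilon}$ I would drop the factor $(1-\varepsilon)\le 1$, apply $\tfrac{1}{1-\varepsilon}\le 1+2\varepsilon$ and then $\log(1+2\varepsilon)\le 2\varepsilon$, and finally $2\varepsilon\le\sqrt2\,\sqrt{\varepsilon}$; this yields $(1-\varepsilon)\log\tfrac{1}{1-\varepsilon}\le\sqrt2\,\sqrt{\varepsilon}$.

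The only slightly non-obvious step is the term $\varepsilon\log(1/\varepsilon)$, where a naive bound on the logarithm is too weak. Here I would rewrite $\log(1/\varepsilon)=2\log(\varepsilon^{-1/2})$ and apply $\log t\le t-1< t$ to $t=\varepsilon^{-1/2}\ge 1$, obtaining $\log(1/\varepsilon)<2\varepsilon^{-1/2}$ and hence $\varepsilon\log(1/\varepsilon)<2\sqrt{\varepsilon}$. (If one reads $\log$ as $\log_2$, so that an extra factor $1/\ln 2$ appears, the same idea with $\log(1/\varepsilon)=4\log(\varepsilon^{-1/4})$ combined with $\varepsilon^{3/4}-\varepsilon=\sqrt{\varepsilon}\cdot\varepsilon^{1/4}(1-\varepsilon^{1/4})\le\tfrac14\sqrt{\varepsilon}$ — which is just $u(1-u)\le\tfrac14$ for $u=\varepsilon^{1/4}$ — still keeps the constant below $2$.)

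Adding the three bounds gives $3\varepsilon+H(\varepsilon)\le\bigl(\tfrac{3}{\sqrt2}+\sqrt2+2\bigr)\sqrt{\varepsilon}=\bigl(\tfrac{5}{\sqrt2}+2\bigr)\sqrt{\varepsilon}<6\sqrt{\varepsilon}$, which is the claim. The main — and essentially the only — obstacle is the bookkeeping: each individual estimate is comfortably loose, but one has to be a little careful to use $\varepsilon\le 1/2$, rather than just $\varepsilon\le 1$, when converting powers of $\varepsilon$ back into $\sqrt{\varepsilon}$, so that the three constants still sum to at most $6$.
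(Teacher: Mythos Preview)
Your proof is correct and follows essentially the same route as the paper's: split into the three summands, bound $(1-\varepsilon)\log\tfrac{1}{1-\varepsilon}$ by $2\varepsilon$ via the two quoted inequalities, bound $\varepsilon\log(1/\varepsilon)$ by a multiple of $\sqrt{\varepsilon}$, and then convert the linear $\varepsilon$-terms to $\sqrt{\varepsilon}$. The only cosmetic difference is that the paper uses $\log t\le\sqrt{t}$ to get $\varepsilon\log(1/\varepsilon)\le\sqrt{\varepsilon}$ and then the cruder $\varepsilon\le\sqrt{\varepsilon}$ to finish with $5\varepsilon+\sqrt{\varepsilon}\le 6\sqrt{\varepsilon}$, whereas you obtain $\varepsilon\log(1/\varepsilon)<2\sqrt{\varepsilon}$ and compensate with the sharper $\varepsilon\le\sqrt{\varepsilon}/\sqrt{2}$; your treatment of the $\log_2$ case is a nice extra bit of care that the paper omits.
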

\begin{proof}
Recall that $H(x) = -x \log(x)-(1-x)\log(1-x)$. Using $\dfrac{1}{1-x} \leq 1+2x$ for $0 \leq x \leq 1/2$ and $\log(1+x) \leq x$ for $x > -1$ we have
\begin{eqnarray*}
3\varepsilon+H(\varepsilon) & = & 3\varepsilon + \varepsilon  \log(1/\varepsilon)+(1-\varepsilon) \cdot \log(1/(1-\varepsilon))\\
& \leq & 3\varepsilon + \varepsilon \sqrt{1/\varepsilon} + \log(1/(1-\varepsilon))\\
& \leq & 3\varepsilon+\sqrt{\varepsilon} + 2\varepsilon\\
& \leq & 6 \sqrt{\varepsilon}. 
\end{eqnarray*}
\end{proof}

\subsection*{Experiments}

%\begin{figure}[h]
%\includegraphics[width=\textwidth]{images/random_runtime_mean.pdf}
%\caption{Running times of HS and RM on the random instances.}\label{fig:random_runtime}
%\end{figure}

\begin{figure}[h]
\includegraphics[width=\textwidth]{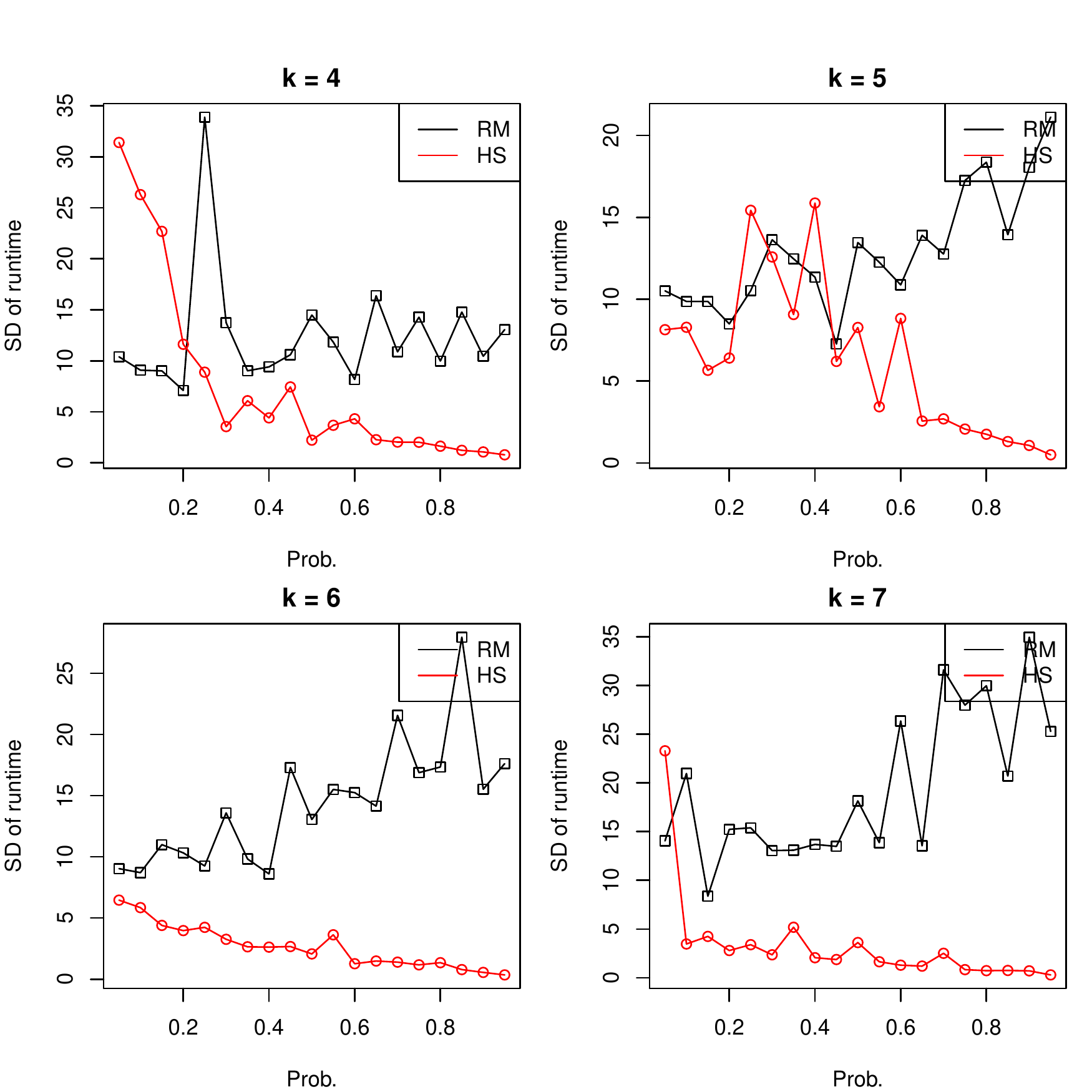}
\caption{Standard deviations of the running times.}\label{fig:random_runtime_sd}
\end{figure}

%\begin{figure}[h]
%\includegraphics[width=\textwidth]{images/random_space.pdf}
%\caption{Space usage of HS and RM on the random instances.}\label{fig:random_space}
%\end{figure}

\begin{figure}[h]
\includegraphics[width=\textwidth]{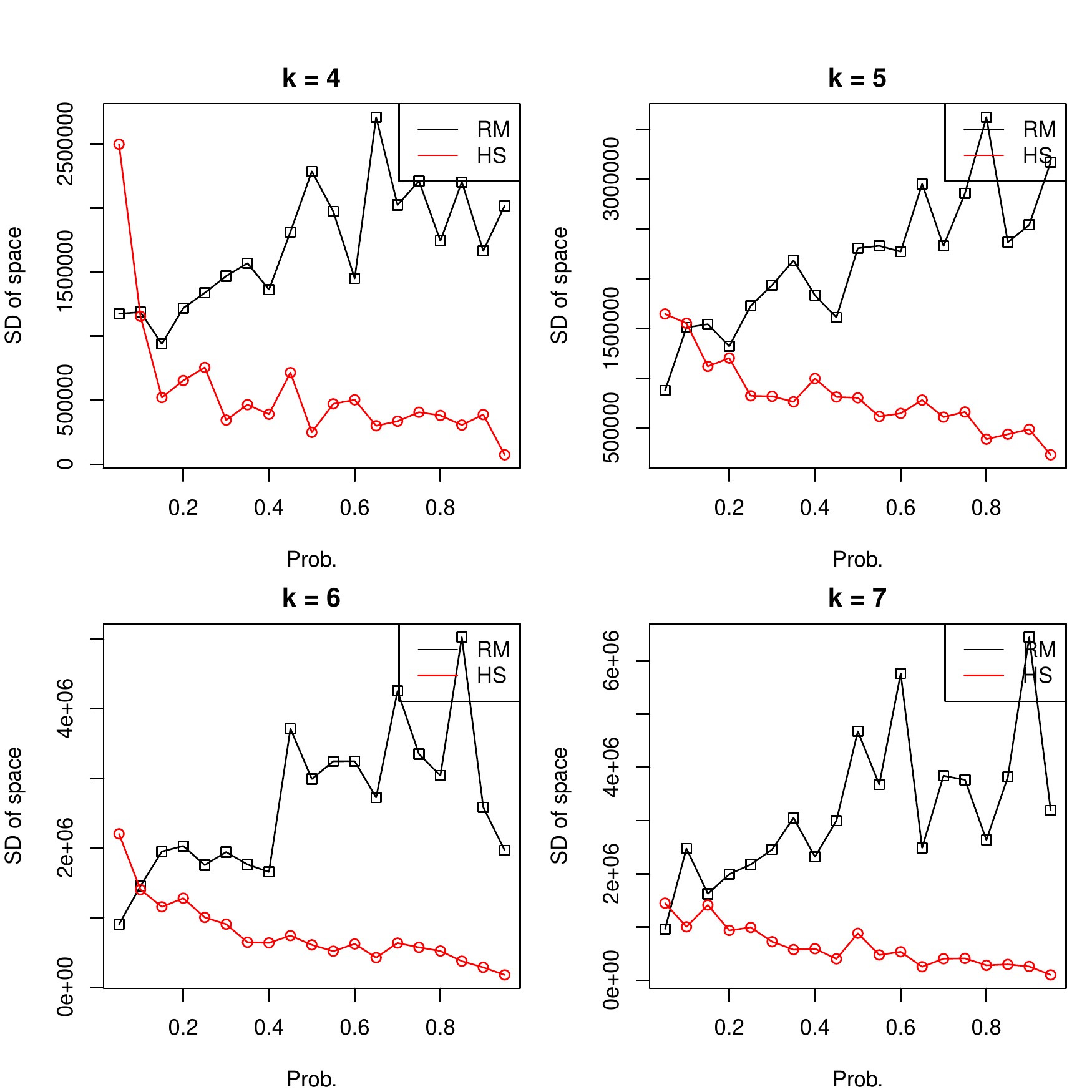}
\caption{Standard deviation of the space usage.}\label{fig:random_space_sd}
\end{figure}

\begin{table}[h]
\begin{center}
\begin{tabular}{|c|c|c|c|c|}
\hline
Number & Running Time (RM) & Running Time (HS) & SBDD Size (RM) & SBDD Size (HS)  \\ \hline
$0$ & \textbf{0.243} & 0.475 & \textbf{567210} & 1346996 \\ \hline
$1$ & \textbf{0.264} & 0.571 & \textbf{555968} & 1394008 \\ \hline
$2$ & \textbf{0.256} & 0.567 & \textbf{553924} & 1394008 \\ \hline
$3$ & \textbf{0.059} & 0.066 & \textbf{153300} & 220752 \\ \hline
$4$ & 0.055 & \textbf{0.041} & \textbf{161476} & 194180 \\ \hline
$5$ & \textbf{0.042} & 0.043 & \textbf{153300} & 194180 \\ \hline
$6$ & \textbf{0.064} & 0.067 & \textbf{196224} & 252434 \\ \hline
$7$ & \textbf{0.042} & 0.073 & \textbf{163520} & 279006 \\ \hline
$8$ & \textbf{0.055} & 0.072 & \textbf{169652} & 279006 \\ \hline
$9$ & \textbf{0.09} & 0.12 & \textbf{240170} & 368942 \\ \hline
$10$ & \textbf{0.099} & 0.11 & \textbf{237104} & 368942 \\ \hline
$11$ & \textbf{0.105} & 0.17 & \textbf{245280} & 368942 \\ \hline
$12$ & 0.067 & \textbf{0.052} & \textbf{236082} & 245280 \\ \hline
$13$ & \textbf{0.058} & 0.07 & \textbf{242214} & 310688 \\ \hline
$14$ & 0.091 & \textbf{0.066} & \textbf{284116} & 328062 \\ \hline
$15$ & \textbf{2.565} & 6.259 & \textbf{3115056} & 7887796 \\ \hline
\end{tabular}
\end{center}
\caption{Running times and space usage of RM and HS on real-world instances from \textup{\cite{NegPSSS09}}.}
\end{table}

\begin{table}[h]
\begin{center}
\begin{tabular}{|c|c|c|c|c|}
\hline
Number & Running Time (RM) & Running Time (HS) & SBDD Size (RM) & SBDD Size (HS)  \\ \hline
$16$ & \textbf{2.545} & 6.167 & \textbf{3115056} & 7874510 \\ \hline
$17$ & \textbf{4.002} & 6.329 & \textbf{3115056} & 7874510 \\ \hline
$18$ & \textbf{1.112} & 1.81 & \textbf{2053198} & 2320962 \\ \hline
$19$ & \textbf{0.913} & 2.043 & \textbf{2035824} & 2485504 \\ \hline
$20$ & \textbf{0.828} & 1.931 & \textbf{2035824} & 2485504 \\ \hline
$21$ & 0.073 & \textbf{0.036} & \textbf{182938} & 231994 \\ \hline
$22$ & 0.059 & \textbf{0.046} & \textbf{163520} & 240170 \\ \hline
$23$ & 0.095 & \textbf{0.036} & \textbf{162498} & 240170 \\ \hline
$24$ & 0.043 & \textbf{0.022} & 134904 & 134904 \\ \hline
$25$ & 0.037 & \textbf{0.021} & 135926 & 135926 \\ \hline
$26$ & 0.058 & \textbf{0.018} & 169652 & \textbf{135926} \\ \hline
$27$ & \textbf{0.203} & 0.331 & \textbf{346458} & 731752 \\ \hline
$28$ & \textbf{0.188} & 0.348 & \textbf{317842} & 677586 \\ \hline
$29$ & \textbf{0.244} & 0.305 & \textbf{319886} & 677586 \\ \hline
$30$ & \textbf{0.632} & 1.176 & \textbf{1314292} & 2100210 \\ \hline
$31$ & \textbf{0.568} & 1.114 & \textbf{1280566} & 2104298 \\ \hline
$32$ & \textbf{0.458} & 0.568 & \textbf{950460} & 1410360 \\ \hline
\end{tabular}
\end{center}
\caption{Running times and space usage of RM and HS on real-world instances from \textup{\cite{NegPSSS09}}.}
\end{table}

\begin{table}[h]
\begin{center}
\begin{tabular}{|c|c|c|c|}
\hline
Number & $\vert U \vert$ & $\vert W \vert$ & Edges\\ \hline
$0$ & $136$ & $18872$ & $222951$   \\ \hline
$1$ & $137$ & $18872$ & $222951$  \\ \hline
$2$ & $137$ & $18888$ & $222951$  \\ \hline
$3$ & $40$ & $7086$ & $33609$  \\ \hline
$4$ & $41$ & $7086$ & $33609$  \\ \hline
$5$ & $41$ & $7093$ & $33609$  \\ \hline
$6$ & $125$ & $7107$ & $33609$  \\ \hline
$7$ & $86$ & $7117$ & $33609$  \\ \hline
$8$ & $86$ & $7127$ & $33609$  \\ \hline
$9$ & $289$ & $16653$ & $33051$  \\ \hline
$10$ & $290$ & $16846$ & $33051$  \\ \hline
$11$ & $290$ & $16904$ & $33051$  \\ \hline
$12$ & $50$ & $13360$ & $50040$  \\ \hline
$13$ & $51$ & $16264$ & $56577$  \\ \hline
$14$ & $51$ & $21016$ & $56577$  \\ \hline
$15$ & $164$ & $288826$ & $2523313$  \\ \hline
\end{tabular}
\begin{tabular}{|c|c|c|c|c|}
\hline
Number & $\vert U \vert$ & $\vert W \vert$ & Edges \\ \hline
$16$ & $165$ & $288826$ & $2523313$  \\ \hline
$17$ & $165$ & $288858$ & $2523313$  \\ \hline
$18$ & $196$ & $89030$ & $1080027$  \\ \hline
$19$ & $197$ & $89044$ & $1080041$  \\ \hline
$20$ & $197$ & $89044$ & $1080041$  \\ \hline
$21$ & $40$ & $28489$ & $43629$  \\ \hline
$22$ & $41$ & $28944$ & $43629$  \\ \hline
$23$ & $41$ & $28944$ & $43629$  \\ \hline
$24$ & $35$ & $11361$ & $22279$  \\ \hline
$25$ & $36$ & $11588$ & $22279$  \\ \hline
$26$ & $36$ & $11695$ & $22279$  \\ \hline
$27$ & $934$ & $8752$ & $42711$  \\ \hline
$28$ & $935$ & $8896$ & $42711$  \\ \hline
$29$ & $935$ & $9028$ & $42711$  \\ \hline
$30$ & $125$ & $55058$ & $844598$   \\ \hline
$31$ & $126$ & $56858$ & $844598$   \\ \hline
$31$ & $126$ & $57926$ & $477356$   \\ \hline
\end{tabular}
\end{center}
\caption{Properties of the real-world instances from \textup{\cite{NegPSSS09}}.}
\label{tab:realworld}
\end{table}

\end{document}